\newcommand{\copyrightstatement}{
\begin{textblock}{0.80}(0.10, 0.01) 
\noindent
\footnotesize
\copyright 2021 IEEE.
Personal use of this material is permitted.
Permission from IEEE must be obtained for all other uses, in any current or future media, including reprinting/republishing this material for advertising or promotional purposes, creating new collective works, for resale or redistribution to servers or lists, or reuse of any copyrighted component of this work in other works.
\end{textblock}
}
\newlength\mylen
\newcommand\myinput[1]{ \settowidth\mylen{\KwIn{}} \setlength\hangindent{\mylen} \hspace*{\mylen}#1\\}
\theoremstyle{definition}
\newtheorem{theorem}{Theorem}
\newtheorem{proposition}[theorem]{Proposition}
\newtheorem{lemma}[theorem]{Lemma}
\newtheorem{remark}{Remark}
\newtheorem{assumption}{Assumption}
\DeclareMathOperator*{\GL}{GL}
\DeclareMathOperator*{\image}{Im}
\DeclareMathOperator*{\trace}{tr}
\DeclareMathOperator*{\diag}{diag}
\DeclareMathOperator*{\const}{const.}
\DeclareMathOperator*{\argmin}{argmin}
\DeclareMathOperator*{\minimize}{minimize}
\DeclareMathOperator*{\subject-to}{subject~to}
\DeclareMathOperator*{\s}{s}
\DeclareMathOperator*{\z}{z}
\begin{document}
\copyrightstatement
\title{Block Coordinate Descent Algorithms for Auxiliary-Function-Based Independent Vector Extraction}
\author{
	Rintaro~Ikeshita, \IEEEmembership{Member, IEEE},
	Tomohiro~Nakatani, \IEEEmembership{Fellow, IEEE},
	and~Shoko~Araki, \IEEEmembership{Senior Member, IEEE}%
	\thanks{
		Rintaro Ikeshita, Tomohiro Nakatani, and Shoko Araki
		are with NTT Corporation, Kyoto, 619-0237, Japan (e-mail: ikeshita@ieee.org).
	}
	\thanks{
	}
}


\maketitle
\begin{abstract}
In this paper, we address the problem of extracting all super-Gaussian source signals from a linear mixture in which
(i) the number of super-Gaussian sources $K$ is less than that of sensors $M$, and
(ii) there are up to $M - K$ stationary Gaussian noises that do not need to be extracted.
To solve this problem,
independent vector extraction (IVE) using a majorization minimization and block coordinate descent (BCD) algorithms has been developed, attaining robust source extraction and low computational cost.
We here improve the conventional BCDs for IVE by carefully exploiting the stationarity of the Gaussian noise components.
We also newly develop a BCD for a semiblind IVE in which the transfer functions for several super-Gaussian sources are given a priori.
Both algorithms consist of a closed-form formula and a generalized eigenvalue decomposition.
In a numerical experiment of extracting speech signals from noisy mixtures,
we show that when $K = 1$ in a blind case or at least $K - 1$ transfer functions are given in a semiblind case,
the convergence of our proposed BCDs is significantly faster
than those of the conventional ones.
\end{abstract}

\begin{IEEEkeywords}
Blind source extraction,
independent component analysis,
independent vector analysis,
block coordinate descent method,
generalized eigenvalue problem
\end{IEEEkeywords}
\IEEEpeerreviewmaketitle

\section{Introduction}
\label{sec:intro}


\IEEEPARstart{B}{lind} source separation (BSS) is a problem concerned with estimating the original source signals from a mixture signal captured by multiple sensors.
When the number of sources is no greater than that of sensors, i.e., in the (over-)determined case,
independent component analysis (ICA~\cite{comon2010handbook,cichocki2002adaptive,lee1998independent,stone2004independent})
has been a popular approach to BSS because it is simple and effective.
When each original source is a vector and a multivariate random variable,
independent vector analysis (IVA~\cite{kim2007,hiroe2006}), also termed joint BSS~\cite{li2009joint,li2011joint,anderson2011joint}, has been widely studied as an extension of ICA.

In this paper, we will only focus on the problem of extracting all the super-Gaussian sources from a linear mixture signal under the following assumptions and improve the computational efficacy of IVA (ICA can also be improved in the same way as IVA, and so we only deal with IVA):
\begin{enumerate}
\item The number of super-Gaussian sources $K$ is known and fewer than that of sensors $M$, i.e., $K < M$.
\item There can be up to $M - K$ stationary Gaussian noises, and thus the problem remains (over-)determined.
\end{enumerate}
The second assumption, which concerns the rigorous development of efficient algorithms, can be violated to some extent when applied in practice (see numerical experiments in Section~\ref{sec:exp}).
To distinguish it from a general BSS, this problem is called blind source extraction (BSE~\cite{cichocki2002adaptive}).
The BSE problem is often encountered in such applications as speaker source enhancement, biomedical signal processing, or passive radar/sonar.
In speaker source enhancement, for instance, 
the maximum number of speakers (super-Gaussian sources) can often be predetermined as a certain number (e.g., two or three) while an audio device is equipped with more microphones.
In real-word applications, the observed signal is contaminated with background noises, and most noise signals are more stationary than speaker signals.

BSE can be solved by simply applying IVA as if there were $M$ super-Gaussian sources and selecting the top $K$ $(< M)$ super-Gaussian signals from the $M$ separated signals in some way.
However, this approach is computationally intensive if $M$ gets too large.
To reduce the computing cost, for preprocessing of IVA, the number of channels (or sensors) can be reduced to $K$ by using principle component analysis or by selecting $K$ sensors with high SNRs.
These channel reductions, however, often degrade the separation performance due to the presence of the background noises.

BSE methods for efficiently extracting just one or several non-Gaussian signals (not restricted to super-Gaussian signals) have already been studied mainly in the context of ICA%
~\cite{friedman1974,huber1985,friedman1987exploratory,cardoso1993,delfosse1995adaptive,cruces2004,amari1998adaptive,hyvarinen1997fastica,hyvarinen1999fastica,oja2006fastica,wei2015}.
The natural gradient algorithm and the FastICA method with deflation techniques~\cite{amari1998adaptive,hyvarinen1997fastica,hyvarinen1999fastica} can sequentially extract non-Gaussian sources one by one.
In FastICA, the deflation is based on the orthogonal constraint where the sample correlation between every pair of separated signals equals zero.
This constraint was also used to develop FastICA with symmetric orthonormalization~\cite{hyvarinen1997fastica,hyvarinen1999fastica,oja2006fastica,wei2015} that can simultaneously extract $K$ non-Gaussian signals.

\subsection{Independent vector extraction (IVE)}

Recently, maximum likelihood approaches have been proposed for BSE in which the background noise components are modeled as stationary Gaussians.
These methods include
independent vector extraction (IVE~\cite{koldovsky2018ive,koldovsky2017ive,jansky2020adaptive})
and overdetermined IVA (OverIVA~\cite{scheibler2019overiva,scheibler2020fast,scheibler2020ive,ike2020overiva}),
which will be collectively referred to as IVE in this paper.
When $K$ non-Gaussian sources are all super-Gaussian
(as defined in Assumption~\ref{assumption:superGaussian}), IVE can use a majorization-minimization (MM~\cite{lange2016mm}) optimization algorithm developed for
an auxiliary-function-based ICA (AuxICA~\cite{ono2010auxica})
and
an auxiliary-function-based IVA (AuxIVA~\cite{ono2011auxiva,ono2012auxiva-stereo,ono2018asj}).
In this paper, we only deal with an IVE based on the MM algorithm~\cite{scheibler2019overiva,scheibler2020fast,scheibler2020ive,ike2020overiva,jansky2020adaptive}
and always assume that the super-Gaussian distributions are defined by Assumption~\ref{assumption:superGaussian} in Section~\ref{sec:model:IVE}.

In the MM-based IVE, 
each separation matrix $W \in \mathbb{C}^{M \times M}$ is optimized by iteratively minimizing a surrogate function of the following form:
\begin{align*}
g(W) &=
 \sum_{i = 1}^K \bm{w}_i^h V_i \bm{w}_i + \trace\left( W_{\z}^h V_{\z} W_{\z} \right) - \log | \det W |^2,
\\
W &= [\bm{w}_1,\ldots,\bm{w}_K,W_{\z}] \in \mathbb{C}^{M \times M},
\\
W_{\z} &= [\bm{w}_{K+1},\ldots,\bm{w}_M] \in \mathbb{C}^{M \times (M - K)}.
\end{align*}
Here, $\bm{w}_1,\ldots,\bm{w}_K \in \mathbb{C}^M$ are filters that extract the target-source signals and $W_{\z}$ is a filter that extracts the $M - K$ noise components.
$V_1,\ldots,V_K,V_{\z} \in \mathbb{C}^{M \times M}$ are the Hermitian positive definite matrices that are updated in each iteration of the MM algorithm.
Interestingly, when $K = M$ or when viewing the second term of $g(W)$ as $\trace\left( W_{\z}^h V_{\z} W_{\z} \right) = \sum_{i = K + 1}^M \bm{w}_i^h V_{\z} \bm{w}_i$,
the problem of minimizing $g(W)$ has been discussed in ICA/IVA literature~\cite{pham2001,degerine2004,degerine2006maxdet,yeredor2009HEAD,yeredor2012SeDJoCo}.
Among the algorithms developed so far, block coordinate descent (BCD~\cite{tseng2001convergence}) methods with simple analytical solutions have attracted much attention in the field of audio source separation because they have been experimentally shown to give stable and fast convergence behaviors.
A family of these BCD algorithms~\cite{ono2010auxica,ono2011auxiva,ono2012auxiva-stereo,ono2018asj}, summarized in Table~\ref{table:alg}, is currently called an iterative projection (IP) method.
The IP method has been widely applied not only to ICA and IVA but also to audio source separation methods using more advanced source models (see~\cite{kitamura2016ilrma,kameoka2019MVAE,makishima2019independent,sekiguchi2019fast,ito2019fastmnmf} for the details of such methods).

\subsection{Conventional BCD (or IP) algorithms for IVE}

When we consider directly applying the IP methods developed for AuxIVA to the BSE problem of minimizing $g(W)$ with respect to $W$,
AuxIVA-IP1~\cite{ono2011auxiva} in Table~\ref{table:alg}, for instance, will cyclically update
$\bm{w}_1 \rightarrow \cdots \rightarrow \bm{w}_K \rightarrow \cdots \rightarrow \bm{w}_M$ one by one.
However, in the BSE problem, the signals of interests are $K$ non-Gaussian sources, and most of the optimization of $W_{\z}$ should be skipped.

Therefore, in a previous work \cite{scheibler2019overiva},
an algorithm (IVE-OC in Table~\ref{table:alg}) was proposed that cyclically updates $\bm{w}_1 \rightarrow W_{\z} \rightarrow \cdots \rightarrow \bm{w}_K \rightarrow W_{\z}$ one by one with a computationally cheap formula for $W_{\z}$.
In this work \cite{scheibler2019overiva}, the updating equation for $W_{\z}$ was derived solely from the (weak) orthogonal constraint (OC~\cite{koldovsky2017ive,koldovsky2018ive,jansky2020adaptive}) where the sample correlation between the separated $K$ non-Gaussian sources and the $M - K$ background noises equals zero.
(Note that the $K$ non-Gaussian sources are not constrained to be orthogonal in the OC.)
Although the algorithm has successfully reduced the computational cost of IVA by nearly a factor of $K / M$,
the validity of imposing the heuristic OC remains unclear.

After that, IVE-OC has been extended by removing the OC from the model.
One such extension is a direct method that can obtain a global minimizer of $g(W)$ when $K = 1$~\cite{scheibler2020fast,ike2020overiva}.
The other extension is a BCD algorithm for $K \geq 2$ that cyclically updates the pairs
$(\bm{w}_1,W_{\z}) \rightarrow \cdots \rightarrow (\bm{w}_K,W_{\z})$ one by one~\cite{scheibler2020ive},
but the computational cost is not so cheap due to the full update of $W_{\z}$ in each iteration.
These algorithms are called IVE-IP2 in this paper (see Table~\ref{table:alg}).

\subsection{Contributions}

In this work, we propose BCD algorithms for IVE, which are summarized in Table~\ref{table:alg} with comparisons to the previous BCDs (IPs).
The followings are the contributions of this paper.

(i)
We speed up the previous IVE-IP2 for $K \geq 2$ by showing that 
$W_{\z}$'s update can be omitted in each iteration of BCD without changing the behaviors of the algorithm (Section~\ref{sec:IVE-IP2-new:K>1}).
This is attained by carefully exploiting the stationary Gaussian assumption for the noise components.
In an experiment of speaker source enhancement, we confirmed that the computational cost of the conventional IVE-IP2 is consistently reduced.

(ii)
We provide a comprehensive explanation of IVE-IP2 for $K = 1$ (Sections~\ref{sec:IVE-IP2:K=1} and \ref{sec:source-extraction}).
Interestingly, it turns out to be a method that iteratively updates the maximum signal-to-noise-ratio (MaxSNR~\cite{vantrees2004,warsitz2007maxsnr}) beamformer
and the power spectrum for the unique ($K = 1$) target-source signal.
The experimental result shows that this algorithm has much faster convergence than conventional algorithms.
Note that IVE-IP2 for $K = 1$ was developed independently and simultaneously in our conference paper~\cite{ike2020overiva} and by Scheibler--Ono~\cite{scheibler2020fast}.

(iii)
We reveal that IVE-OC~\cite{scheibler2019overiva}, which was
developed with the help of the heuristic orthogonal constraint (OC),
can also be obtained as a BCD algorithm for our proposed IVE without the OC (Section~\ref{sec:IVE-IP1}).
(This interpretation of IVE-OC as a BCD algorithm was described in our conference paper~\cite{ike2020overiva}, but it is not mathematically rigorous.
We here provide a rigorous proof of this interpretation.)

(iv)
We further extend the proposed IVE-IP2 for the \textit{semiblind} case where the transfer functions for $L$ ($1 \leq L \leq K$) sources of interests can be given a priori (Section~\ref{sec:semi-BSE}).
We call the proposed semiblind method Semi-IVE.
In Semi-IVE, $L$ separation filters, e.g., $\bm{w}_1,\ldots,\bm{w}_L$, which correspond to the known transfer functions are optimized by iteratively solving the linear constrained minimum variance (LCMV~\cite{vantrees2004}) beamforming algorithm,
and the remaining $\bm{w}_{L + 1},\ldots,\bm{w}_K$ (and $W_{\z}$) are optimized by the full-blind IVE-IP2 algorithm.
We experimentally show that when $L \geq K - 1$ transfer functions are given Semi-IVE yields surprisingly fast convergence.

\textit{Organization}: The rest of this paper is organized as follows.
The BSE and semi-BSE problems are defined in Section~\ref{sec:problem}.
The probabilistic model for the proposed IVE is compared with related methods in Section~\ref{sec:model}.
The optimization algorithms are developed separately for the BSE and semi-BSE problems in Sections~\ref{sec:BSE} and \ref{sec:semi-BSE}.
The computational time complexity for these methods is discussed in Section~\ref{sec:computational-complexity}.
In Sections~\ref{sec:exp} and \ref{sec:conclusion}, experimental results and concluding remarks are described.

\subsection{Notations}
Let $\mathcal{S}_+^d$ or $\mathcal{S}_{++}^d$ denote the set of all Hermitian positive semidefinite (PSD) or positive definite (PD) matrices of size $d \times d$.
Let $\GL(d)$ denote the set of all (complex) nonsingular matrices of size $d \times d$.
Also, let $\bm{0}_d \in \mathbb{C}^d$ be the zero vector,
let $O_{i,j} \in \mathbb{C}^{i \times j}$ be the zero matrix,
let $I_d \in \mathbb{C}^{d \times d}$ be the identity matrix,
and let $\bm{e}_k$ be a vector whose $k$-th element equals one and the others are zero. 
For a vector or matrix $A$, $A^\top$ and $A^h$ represent
the transpose and the conjugate transpose of $A$.
The element-wise conjugate is denoted as $A^\ast = (A^\top)^h$.
For a matrix $A \in \mathbb{C}^{i \times j}$, $\image A$ is defined as the subspace $\{ A \bm{x} \mid \bm{x} \in \mathbb{C}^{j} \} \subset \mathbb{C}^i$.
For a vector $\bm{v}$, $\| \bm{v} \| = \sqrt{\bm{v}^h \bm{v}}$ denotes the Euclidean norm.

\begin{table*}[t]
\begin{center}
{\small
\caption{Optimization process of BCD algorithms for problem~\eqref{problem:maxdet}}
\label{table:alg}
\begin{tabular}{cccccl} \hline
& Method & \multicolumn{1}{c}{Algorithm} & Reference & Assumption & \multicolumn{1}{c}{Optimization process of BCD}
\\ \hline
\multirow{3}{*}{Conventional} &
\multirow{3}{*}{(Aux)IVA} & IP1 & \cite{ono2011auxiva} & - &
$\bm{w}_1 \rightarrow \cdots \rightarrow \bm{w}_K \rightarrow \cdots \rightarrow \bm{w}_M$
\\
&& IP2 & \cite{ono2012auxiva-stereo} & $K = M = 2$ &
$W = [ \bm{w}_1,\bm{w}_2 ]$ (direct optimization)
\\
&& IP2 & \cite{ono2018asj} & (if $M$ is even) &
$(\bm{w}_1, \bm{w}_2) \rightarrow \cdots \rightarrow (\bm{w}_{M - 1}, \bm{w}_M)$
\\
\hline
\multirow{2}{*}{Conventional} &
\multirow{2}{*}{IVE} & IVE-OC$\empty^{1)}$ & \cite{scheibler2019overiva}, \S\ref{sec:model:IVE-OC} & Orthogonal constraint (OC) &
$\bm{w}_1 \rightarrow W_{\z}^1 \rightarrow \cdots \rightarrow \bm{w}_K \rightarrow W_{\z}^1$
\\
&& IP2$\empty^{3)}$ & \cite{scheibler2020ive}, \S\ref{sec:IVE-IP2-old:K>1} & $K \geq 2$ & $(\bm{w}_1, W_{\z}) \rightarrow \cdots \rightarrow (\bm{w}_K, W_{\z})$
\\
\hline
\multirow{3}{*}{Proposed} &
\multirow{3}{*}{IVE} & IP1$\empty^{1)}$ & \S\ref{sec:IVE-IP1} & - &
$\bm{w}_1 \rightarrow (W_{\z},\Omega_{\z}) \rightarrow \cdots \rightarrow \bm{w}_K \rightarrow (W_{\z}, \Omega_{\z})$
\\
&& IP2$\empty^{2)}$ & \S\ref{sec:IVE-IP2:K=1} & $K = 1$ &
$W = [\bm{w}_1, (W_{\z})]$ (direct optimization)
\\
&& IP2$\empty^{3)}$ & \S\ref{sec:IVE-IP2-new:K>1} & $K \geq 2$ &
$(\bm{w}_1, (W_{\z})) \rightarrow \cdots \rightarrow (\bm{w}_K, (W_{\z}))$
\\
\hline
\multirow{2}{*}{Proposed} &
\multirow{2}{*}{Semi-IVE} & \multirow{2}{*}{IP2$\empty^{4)}$} & \S\ref{sec:semi-BSE} & Given $\bm{a}_1,\cdots,\bm{a}_L$ $(L \geq K - 1)$ &
$W = [\bm{w}_1, \ldots, \bm{w}_K, (W_{\z})]$ (direct optimization)
\\
&&& \S\ref{sec:semi-BSE} & Given $\bm{a}_1,\cdots,\bm{a}_{L}$ $(L \leq K - 2)$ &
$(\bm{w}_{L + 1}, (W_{\z})) \rightarrow \cdots \rightarrow (\bm{w}_K, (W_{\z}))$
\\
\hline
\multicolumn{6}{l}{$\empty^{1)}$ The two algorithms are identical as shown in Section~\ref{sec:IVE-IP1}.}
\\
\multicolumn{6}{l}{$\empty^{2)}$ It was developed independently and simultaneously by Scheibler--Ono~\cite{scheibler2020fast} and the authors~\cite{ike2020overiva} in Proceedings of ICASSP2020.}
\\
\multicolumn{6}{l}{$\empty^{3)}$ The proposed IVE-IP2 for $K \geq 2$ is an acceleration of the conventional IVE-IP2 developed in~\cite{scheibler2020ive}.}
\\
\multicolumn{6}{l}{$\empty^{4)}$ $\bm{w}_1,\cdots,\bm{w}_L$, which correspond to $\bm{a}_1,\ldots,\bm{a}_L$, are directly globally optimized as the LCMV beamformers (see Section~\ref{sec:semi-BSE:LCMV}).}
\\
\multicolumn{6}{l}{$\empty^{2, 3, 4)}$ In the proposed IVE-IP2 and Semi-IVE, the optimizations for $W_{\z}$ are skipped (see Sections~\ref{sec:IVE-IP2:K=1}, \ref{sec:IVE-IP2-new:K>1}, and \ref{sec:Semi-IVE}).}
\end{tabular}
}
\end{center}
\end{table*}

\section{Blind and semiblind source extraction}
\label{sec:problem}

\subsection{Blind source extraction (BSE) problem}

Throughout this paper, we discuss IVA and IVE using the terminology from audio source separation in the short-term Fourier transform (STFT) domain.

Suppose that $K$ super-Gaussian target-source signals and a stationary Gaussian noise signal of dimension $M - K$ are transmitted and observed by $M$ sensors.
In this paper, we only consider the case where $1 \leq K < M$.
The observed signal in the STFT domain is modeled at each frequency bin $f = 1,\ldots, F$ and time-frame $t = 1,\ldots,T$ as
\begin{align}
\label{eq:mixture}
\bm{x}(f,t) &= A_{\s}(f) \bm{s}(f,t) + A_{\z}(f) \bm{z}(f,t) \in \mathbb{C}^{M},
\\
A_{\s}(f) &= [\, \bm{a}_{1}(f), \ldots, \bm{a}_{K}(f) \,] \in \mathbb{C}^{M \times K},
\\
\bm{s}(f,t) &= [\, s_{1}(f,t), \ldots, s_{K}(f,t) \,]^\top \in \mathbb{C}^K,
\\
A_{\z}(f) &\in \mathbb{C}^{M \times (M - K)}, \quad \bm{z}(f,t) \in \mathbb{C}^{M - K},
\end{align}
where $s_i(f,t) \in \mathbb{C}$ and $\bm{z}(f,t) \in \mathbb{C}^{M - K}$ are the STFT coefficients of target source $i \in \{1,\ldots,K\}$ and the noise signal, respectively.
$\bm{a}_i(f) \in \mathbb{C}^M$ is the (time-independent) acoustic transfer function (or steering vector) of source $i$ to the sensors,
and $A_{\z}(f) \in \mathbb{C}^{M \times (M - K)}$ is that of the noise.
It is assumed that the source signals are statistically mutually independent.

In the \textbf{blind source extraction (BSE)} problem, we are given an observed mixture $\bm{x} = \{ \bm{x}(f,t) \}_{f,t}$ and the number of target sources $K$.
From these inputs, we seek to estimate the spatial images $\bm{x}_1,\ldots,\bm{x}_K$ for the target sources, which are defined as
$\bm{x}_i(f,t) \coloneqq \bm{a}_i(f) s_i(f,t) \in \mathbb{C}^M$, $i = 1,\ldots,K$.

\subsection{Semiblind source extraction (Semi-BSE) problem}

In the \textbf{semiblind source extraction (Semi-BSE)} problem, in addition to the BSE inputs, we are given $L$ transfer functions $\bm{a}_1, \ldots, \bm{a}_L$ for $L$ super-Gaussian sources, where $1 \leq L \leq K$.
From these inputs, we estimate all the target-source spatial images $\bm{x}_1,\ldots,\bm{x}_K$.
If $L = K$, then Semi-BSE is known as a beamforming problem.

\textbf{Motivation to address Semi-BSE:}
In some applications of audio source extraction, such as meeting diarization~\cite{ito2017data}, the locations of some (but not necessarily all) point sources are available or can be estimated accurately,
and their acoustic transfer functions can be obtained from, e.g., the sound propagation model~\cite{johnson1992array}.
For instance, in a conference situation, the attendees may be sitting in chairs with fixed, known locations.
On the other hand, the locations of moderators, panel speakers, or audience may change from time to time and cannot be determined in advance.
In such a case, by using these partial prior knowledge of transfer functions, Semi-BSE methods can improve the computational efficiency and separation performance of BSE methods.
In addition, since there are many effective methods for estimating the transfer function of (at least) a dominant source~\cite{warsitz2007blind,khabbazibasmenj2012robust,vorobyov2013principles},
there is a wide range of applications where Semi-BSE can be used to improve the performance of BSE.

\section{Probabilistic models}
\label{sec:model}

We start by presenting the probabilistic models for the proposed auxiliary-function-based independent vector extraction (IVE),
which is almost the same as those of such related methods as
IVA~\cite{kim2007}, AuxIVA~\cite{ono2011auxiva,ono2012auxiva-stereo,ono2018asj}, and the conventional IVE~\cite{koldovsky2017ive,koldovsky2018ive,jansky2020adaptive,scheibler2019overiva,scheibler2020fast,scheibler2020ive}.

\subsection{Probabilistic model of proposed IVE}
\label{sec:model:IVE}

In the mixing model~\eqref{eq:mixture}, the mixing matrix 
\begin{align}
\label{eq:A}
A(f) = [\, \bm{a}_1(f), \ldots, \bm{a}_K(f), A_{\z}(f) \, ] \in \mathbb{C}^{M \times M}
\end{align}
is square and generally invertible, and hence the problem can be converted into one that estimates a separation matrix
$W(f) \in \GL(M)$ satisfying $W(f)^h A(f) = I_M$, or equivalently, satisfying
\begin{align}
\label{eq:s=wx}
s_i(f,t) &= \bm{w}_i(f)^h \bm{x}(f,t) \in \mathbb{C}, \quad i = 1,\ldots,K,
\\
\label{eq:z=Wx}
\bm{z}(f,t) &= W_{\z}(f)^h \bm{x}(f,t) \in \mathbb{C}^{M - K},
\end{align}
where we define
\begin{align}
\label{eq:W}
W(f) &= [\, \bm{w}_1(f), \ldots, \bm{w}_K(f), W_{\z}(f) \, ] \in \GL(M),
\\
\bm{w}_i(f) &\in \mathbb{C}^M, \quad i = 1,\ldots,K,
\\
W_{\z}(f) &\in \mathbb{C}^{M \times (M - K)}.
\end{align}

Denote by
$\bm{s}_i(t) = [\, s_i(1,t), \ldots, s_i(F,t) \,]^\top \in \mathbb{C}^F$
the vector of all the frequency components for source $i$ and time-frame $t$.
The proposed IVE exploits the following three assumptions.
Note that Assumption~\ref{assumption:superGaussian} was introduced for developing AuxICA~\cite{ono2010auxica} and
AuxIVA~\cite{ono2011auxiva,ono2012auxiva-stereo,ono2018asj}.
\begin{assumption}[Independence of sources]
The random variables $\{ \bm{s}_i(t), \bm{z}(f,t) \}_{i,f,t}$ are mutually independent:
\begin{align*}
p( \{ \bm{s}_i(t), \bm{z}(f,t) \}_{i,f,t} ) = \prod_{i,t} p(\bm{s}_i(t)) \cdot \prod_{f,t} p(\bm{z}(f,t)).
\end{align*}
\end{assumption}
\begin{assumption}[Super-Gaussian distributions for the target sources~\cite{ono2010auxica,ono2011auxiva}]
\label{assumption:superGaussian}

The target-source signal $\bm{s}_i(t)$ follows a circularly symmetric super-Gaussian distribution:
\begin{align}
- \log p(\bm{s}_i(t)) = G(\| \bm{s}_i(t) \|) + \const,
\end{align}
where $G \colon \mathbb{R}_{\geq 0} \to \mathbb{R}$ is differentiable and satisfies that
$\frac{G'(r)}{r}$ is nonincreasing on $r \in \mathbb{R}_{> 0}$.
Here, $G'$ is the first derivative of $G$.
Candidates of $G$ (or the probability density functions) include
the $\log \cosh$ function
and
the circularly symmetric generalized Gaussian distribution (GGD) with
the scale parameter $\alpha_i \in \mathbb{R}_{> 0}$
and
the shape parameter $0 < \beta < 2$,
which is also known as the exponential power distribution~\cite{gomez1998ggd}:
\begin{align}
G(\| \bm{s}_i(t) \|) = \left( \frac{\| \bm{s}_i(t) \| }{\alpha_i} \right)^\beta
+ 2 F \log \alpha_i.
\end{align}
GGD is a parametric family of symmetric distributions, and when $\beta = 1$ it is nothing but the complex Laplace distribution.
It has been experimentally shown in many studies that ICA type methods including IVA and IVE can work effectively for audio source separation tasks when audio signals such as speech signals are modeled by the super-Gaussian distributions (see, e.g.,~\cite{kim2007,hiroe2006,koldovsky2018ive,koldovsky2017ive,jansky2020adaptive,scheibler2019overiva,scheibler2020fast,scheibler2020ive,ike2020overiva,ono2011auxiva,ono2012auxiva-stereo,ono2018asj}).
\end{assumption}
\begin{assumption}[Stationary Gaussian distribution for the background noise]
\label{assumption:noise}
The noise signal $\bm{z}(f,t) \in \mathbb{C}^{M - K}$ follows a circularly symmetric complex Gaussian distribution with the zero mean and identity covariance matrix:
\begin{align}
\label{eq:z:pdf}
\bm{z}(f,t) &\sim \mathbb{C} \mathcal{N} \left( \bm{0}_{M - K}, I_{M - K} \right),
\\
p(\bm{z}(f,t)) &= \frac{1}{\pi^{M - K}} \exp \left( - \| \bm{z}(f,t) \|^2 \right).
\end{align}
\end{assumption}

Assumption~\ref{assumption:noise} plays a central role for deriving several efficient algorithms for IVE.
Despite this assumption, as we experimentally confirm in Section~\ref{sec:exp},
the proposed IVE can extract speech signals even in a diffuse noise environment where the noise signal is considered super-Gaussian or nonstationary and has an arbitrary large spatial rank.

With the model defined by \eqref{eq:s=wx}--\eqref{eq:z:pdf}, the negative loglikelihood, $g_0(W) \coloneqq - \frac{1}{T} \log p(\bm{x} \mid W)$, can be computed as
\begin{alignat}{2}
\nonumber
g_0(W)
&=&\,& \frac{1}{T} \sum_{i = 1}^K \sum_{t = 1}^T G( \| \bm{s}_i(t) \| )
+ \frac{1}{T} \sum_{f = 1}^F \sum_{t = 1}^T \| \bm{z}(f,t) \|^2
\\
\label{eq:loss}
&&& 
- 2 \sum_{f = 1}^F \log | \det W(f) | + \const,
\end{alignat}
where $W \coloneqq \{ W(f) \}_{f = 1}^F$ are the variables to be optimized.

\begin{remark}
The stationary Gaussian noise model \eqref{eq:z:pdf} with the identity covariance matrix does not sacrifice generality.
At first glance, it seems better to employ
\begin{align}
\label{eq:z:R}
\bm{z}(f,t) \sim \mathbb{C}\mathcal{N} \left( \bm{0}, \Omega_{\z}(f) \right)
\end{align}
with a general covariance matrix $\Omega_{\z}(f) \in \mathcal{S}_{++}^{M - K}$.
However, we can freely change the variables to satisfy \eqref{eq:z:pdf} using the ambiguity between $A_{\z}$ and $\bm{z}$, given by
\begin{align}
A_{\z}(f) \bm{z}(f,t) &= (A_{\z}(f) \Omega_{\z}(f)^{\frac{1}{2}}) (\Omega_{\z}(f)^{- \frac{1}{2}} \bm{z}(f,t)).
\end{align}
\end{remark}

\subsection{Relation to IVA and AuxIVA}
\label{sec:model:IVA}

If we assume that the $M - K$ noise components also independently follow super-Gaussian distributions,
then the IVE model coincides with that of IVA~\cite{kim2007} or AuxIVA~\cite{ono2011auxiva,ono2012auxiva-stereo,ono2018asj}.
The following are the two advantages of assuming the stationary Gaussian model \eqref{eq:z:pdf}.

(i) As we confirm experimentally in Section~\ref{sec:exp}, when we optimize the IVE model, separation filters $\bm{w}_1,\ldots,\bm{w}_K$ extract the top $K$ highly super-Gaussian (or nonstationary) signals such as speech signals from the observed mixture while $W_{\z}$ extracts only the background noises that are more stationary and approximately follow Gaussian distributions.
On the other hand, in IVA, which assumes super-Gaussian noise models, $K$ ($< M$) target-source signals need to be chosen from the $M$ separated signals after optimizing the model.

(ii) As we reveal in Section~\ref{sec:IVE-IP1}, in the proposed IVE,
it suffices to optimize $\image W_{\z} = \{ W_{\z} \bm{v} \mid \bm{v} \in \mathbb{C}^{M - K} \} \subset \mathbb{C}^M$, the subspace spanned by $W_{\z} \in \mathbb{C}^{M \times (M - K)}$, instead of $W_{\z}$.
Because we can optimize $\image W_{\z}$ very efficiently (Section~\ref{sec:IVE-IP1}), IVE can reduce the computation time of IVA, especially when $K \ll M$.

\subsection{Relation to IVE with orthogonal constraint}
\label{sec:model:IVE-OC}

The proposed IVE is inspired by OverIVA with an orthogonal constraint (OC)~\cite{scheibler2019overiva}.
This conventional method will be called IVE-OC in this paper.

IVE-OC~\cite{scheibler2019overiva} was proposed as an acceleration of IVA~\cite{ono2011auxiva} for the case where $K < M$, while maintaining its separation performance.
The IVE-OC model is defined as the proposed IVE by replacing the noise model from \eqref{eq:z:pdf} to \eqref{eq:z:R} and introducing two additional constraints:
\begin{align}
\label{eq:IVE-IP1:Im(Wz):1}
W_{\z}(f) &= \begin{bmatrix}
W_{\z}^1(f) \\
-I_{M - K}
\end{bmatrix}, \quad W_{\z}^1(f) \in \mathbb{C}^{K \times (M - K)},
\\
\label{eq:OC:sample}
& \hspace{-10mm} \frac{1}{T} \sum_{t = 1}^T \bm{s}(f,t) \bm{z}(f,t)^h = O_{K, M - K}.
\end{align}
The first constraint \eqref{eq:IVE-IP1:Im(Wz):1} may be applicable because there is no need to extract the noise components (see~\cite{scheibler2019overiva,scheibler2020ive} for details).
The second constraint \eqref{eq:OC:sample}, called an \textit{orthogonal constraint (OC~\cite{koldovsky2018ive})}, was introduced to help the model distinguish between the target-source and noise signals.

OC, which forces the sample correlation between the separated target-source and noise signals to be zero,
can equivalently be expressed as
\begin{align}
& \hspace{-9 mm} W_{\s}(f)^h V_{\z}(f) W_{\z}(f) = O_{K, M - K},
\\
V_{\z}(f) &= \frac{1}{T} \sum_{t = 1}^T \bm{x}(f,t) \bm{x}(f,t)^h \in \mathcal{S}_{+}^M,
\\
W_{\s}(f) &= [\, \bm{w}_1(f), \ldots,\bm{w}_K(f) \,] \in \mathbb{C}^{M \times K},
\end{align}
which together with \eqref{eq:IVE-IP1:Im(Wz):1} imply
\begin{align}
\label{eq:IVE-IP1:Im(Wz):2}
& \hspace{-3 mm} W_{\z}^1(f) = (W_{\s}(f)^h V_{\z}(f) E_{\s})^{-1} (W_{\s}(f)^h V_{\z}(f) E_{\z}),
\end{align}
where we define
\begin{align}
\label{eq:Es}
E_{\s} &\coloneqq 
[\, \bm{e}_{1}, \ldots, \bm{e}_K \,] = \begin{bmatrix}
I_{K} \\
O_{M - K, K}
\end{bmatrix} \in \mathbb{C}^{M \times K},
\\
\label{eq:Ez}
E_{\z} &\coloneqq
[\, \bm{e}_{K + 1}, \ldots, \bm{e}_M \,] = \begin{bmatrix}
O_{K, M - K} \\
I_{M - K}
\end{bmatrix} \in \mathbb{C}^{M \times (M - K)}.
\end{align}

From \eqref{eq:IVE-IP1:Im(Wz):1} and \eqref{eq:IVE-IP1:Im(Wz):2}, it turns out that $W_{\z}$ is uniquely determined by $W_{\s}$ in the IVE-OC model.
Hence, in a paper on IVE-OC~\cite{scheibler2019overiva}, an algorithm was proposed
in which $W_{\z}$ is updated based on \eqref{eq:IVE-IP1:Im(Wz):1} and \eqref{eq:IVE-IP1:Im(Wz):2} immediately after
updating any other variables $\bm{w}_1,\ldots,\bm{w}_K$ to always impose OC on the model.
Although the algorithm was experimentally shown to work well, 
its validity from a theoretical point of view is unclear
because the update rule for $W_{\z}$ is derived solely from the constraints \eqref{eq:IVE-IP1:Im(Wz):1}--\eqref{eq:OC:sample}
and does not reflect an objective of the optimization problem for parameter estimation, such as minimizing the negative loglikelihood.

In this paper, we develop BCD algorithms for the maximum likelihood estimation of the proposed IVE that does not rely on OC,
and identify one such algorithm (IVE-IP1 developed in Section~\ref{sec:IVE-IP1}) that exactly coincides with the conventional algorithm for IVE-OC.
This means that OC is not essential for developing fast algorithms in IVE-OC.
Due to removing OC from the model, we can provide other more computationally efficient algorithms for IVE,
which is the main contribution of this paper.

\section{Algorithms for the BSE problem}
\label{sec:BSE}

We develop iterative algorithms for the maximum likelihood estimation of IVE.
The proposed and some conventional algorithms~\cite{scheibler2019overiva,scheibler2020fast,scheibler2020ive,ike2020overiva} are based on the following two methods.
\begin{itemize}
\item One is the conventional majorization-minimization (MM) algorithm developed for AuxICA~\cite{ono2010auxica} and AuxIVA~\cite{ono2011auxiva}.
In MM, instead of dealing with original objective function $g_0$ (Eq.~\eqref{eq:loss}),
a surrogate function of $g_0$ that is easier to minimize is addressed (Section~\ref{sec:BSE:MM}).
\item The other is block coordinate descent (BCD) algorithms.
In each iteration of BCDs, several $W$ columns are updated to globally minimize the above surrogate function with respect to that variable.
In this paper, we propose several BCDs that improve the conventional BCDs.
\end{itemize}
Our proposed algorithms are summarized in Algorithm~\ref{alg:main}.
The optimization processes of all the BCDs are summarized in Table~\ref{table:alg}
and detailed in the following subsections.
The computational time complexities of the algorithms are discussed in Section~\ref{sec:computational-complexity}.

\subsection{Majorization-minimization (MM) approach}
\label{sec:BSE:MM}

We briefly describe how to apply the conventional MM technique developed for AuxICA~\cite{ono2010auxica} to the proposed IVE.

In IVE as well as AuxICA/AuxIVA, a surrogate function $g$ of $g_0$ is designed with an auxiliary variable $r$ that satisfies
\begin{align}
g_0(W) = \min_{r} g (W, r).
\end{align}
Then, variables $W$ and $r$ are alternately updated by iteratively solving
\begin{align}
\label{problem:MM:1}
r^{(l)} &\in \argmin_{r} g(W^{(l - 1)}, r),
\\
\label{problem:MM:2}
W^{(l)} &\in \argmin_{W} g(W, r^{(l)})
\end{align}
for $l = 1,2,\ldots$ until convergence.
In the same way as in AuxIVA~\cite{ono2011auxiva},
by applying Proposition~\ref{prop:MM} in Appendix~\ref{appendix:lemma} to the first term of $g_0(W)$,
problem \eqref{problem:MM:2} in IVE comes down to solving the following $F$ subproblems:
\begin{align}
\label{problem:maxdet}
W(f) \in \argmin_{W(f)} ~ g (W(f), r^{(l)}), \quad f = 1,\ldots,F,
\end{align}
where $r^{(l)} \coloneqq \{ r_i^{(l)}(t) \in \mathbb{R}_{\geq 0} \}_{i,t}$ and
\begin{align}
\nonumber
&\hspace{-10 mm}
g( W(f), r^{(l)} ) = \sum_{i = 1}^K \bm{w}_i(f)^h V_i(f) \bm{w}_i(f)
\\
\label{eq:loss:MM}
&\hspace{-5 mm}
+ \trace \left( W_{\z}(f)^h V_{\z}(f) W_{\z}(f) \right) 
- 2\log | \det W(f) |,
\\
\label{eq:Vi}
V_i(f) &= \frac{1}{T} \sum_{t = 1}^T \phi_{i}(t) \bm{x}(f,t) \bm{x}(f,t)^h,
\\
\label{eq:Vz}
V_{\z}(f) &= \frac{1}{T} \sum_{t = 1}^T \bm{x}(f,t) \bm{x}(f,t)^h,
\\
\label{eq:MM:phi}
\phi_{i}(t) &= \frac{ G'(r_i^{(l)}(t)) }{ 2 r_i^{(l)}(t) }, \quad r_i^{(l)}(t) = \| \bm{s}_i^{(l)}(t) \|,
\\
\label{eq:MM:si}
&\hspace{-7 mm}
s_i^{(l)}(f,t) = \bm{w}_i^{(l - 1)}(f)^h \bm{x}(f,t).
\end{align}
Here, the computation of \eqref{eq:MM:phi}--\eqref{eq:MM:si} corresponds to the optimization of \eqref{problem:MM:1}.
Recall that $G'$ in \eqref{eq:MM:phi} is the first derivative of $G \colon \mathbb{R}_{\geq 0} \to \mathbb{R}$ (see Assumption~\ref{assumption:superGaussian}).
To efficiently solve \eqref{problem:maxdet}, we propose BCD algorithms in the following subsections.
From the derivation, it is guaranteed that the objective function is monotonically nonincreasing at each iteration in the MM algorithm.%
\footnote{
Due to space limitations, 
showing the convergence rate and other convergence properties of the proposed algorithms will be left as a future work.
}

\subsection{Block coordinate descent (BCD) algorithms and the stationary condition for problem \eqref{problem:maxdet}}
\label{sec:BSE:KKT}

No algorithms have been found that obtain a global optimal solution for problem~\eqref{problem:maxdet} for general $K, M \in \mathbb{N}$.
Thus, iterative algorithms have been proposed to find a local optimal solution.
Among them, a family of BCD algorithms has been attracting much attention 
(e.g.,~\cite{ono2010auxica,ono2011auxiva,ono2012auxiva-stereo,ono2018asj,scheibler2019overiva,scheibler2020ive,scheibler2020fast,ike2020overiva,kitamura2016ilrma,kameoka2019MVAE,makishima2019independent,sekiguchi2019fast})
because they have been experimentally shown to work faster and more robustly than other algorithms
such as the natural gradient method~\cite{amari1996natural-gradient}.
The family of these BCD algorithms (specialized to solve \eqref{problem:maxdet}) is currently called an \textit{iterative projection (IP) method}.

As we will see in the following subsections, all the IP algorithms summarized in Table~\ref{table:alg}
can be developed by exploiting 
the stationary condition, which is
also called the first-order necessary optimality condition~\cite{nocedal-Jorge2006optimization}.
To simplify the notation, when we discuss~\eqref{problem:maxdet}, we abbreviate frequency bin index $f$ without mentioning it.
For instance, $W(f)$ and $V_i(f)$ are simply denoted as $W$ and $V_i$ (without confusion).
\begin{lemma}[See, e.g., \cite{pham2001,degerine2004,degerine2006maxdet}]
\label{lemma:KKT}
The stationary condition for problem~\eqref{problem:maxdet} is expressed as ($i = 1,\ldots,K$)
\begin{alignat}{3}
\label{eq:KKT:wk}
\frac{\partial g}{\partial \bm{w}_i^\ast} &= \bm{0}_M &\quad &\Longleftrightarrow &\quad W^h V_i \bm{w}_i &= \bm{e}_i,
\\
\label{eq:KKT:Wz}
\frac{\partial g}{\partial W_{\z}^\ast} &= O_{M, M - K} &\quad &\Longleftrightarrow &\quad W^h V_{\z} W_{\z} &= E_{\z},
\end{alignat}
where $E_{\z}$ is given by \eqref{eq:Ez}.
\end{lemma}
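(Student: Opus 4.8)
The plan is to compute the Wirtinger (complex) gradients of the surrogate objective \eqref{eq:loss:MM} with respect to the conjugate variables $\bm{w}_i^\ast$ and $W_{\z}^\ast$, treating each variable and its conjugate as formally independent, and then to recast the resulting stationarity equations into the compact forms \eqref{eq:KKT:wk}--\eqref{eq:KKT:Wz} by left-multiplying with $W^h$ and invoking $W \in \GL(M)$.

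First I would differentiate the three terms of $g$ separately. For the two quadratic terms, since each $V_i$ and $V_{\z}$ is Hermitian, the elementary identities $\frac{\partial}{\partial \bm{w}_i^\ast}(\bm{w}_i^h V_i \bm{w}_i) = V_i \bm{w}_i$ and $\frac{\partial}{\partial W_{\z}^\ast}\trace(W_{\z}^h V_{\z} W_{\z}) = V_{\z} W_{\z}$ follow by a direct index computation. The only delicate term is the log-determinant. Writing $2\log|\det W| = \log\det W + \log\det W^\ast$ and noting that $\det W$ is holomorphic in the entries of $W$, so that its derivative with respect to $W^\ast$ vanishes, I would apply the standard matrix identity $\frac{\partial}{\partial W^\ast}\log\det W^\ast = W^{-h}$ to obtain $\frac{\partial}{\partial W^\ast}(-2\log|\det W|) = -W^{-h}$. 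Restricting this matrix gradient to the columns indexed by $1,\ldots,K$ and by $K+1,\ldots,M$ produces the per-block contributions $-W^{-h}\bm{e}_i$ and $-W^{-h}E_{\z}$, respectively, with $E_{\z}$ as in \eqref{eq:Ez}.

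Combining the contributions, the stationarity conditions read $V_i \bm{w}_i = W^{-h}\bm{e}_i$ for $i = 1,\ldots,K$ and $V_{\z} W_{\z} = W^{-h} E_{\z}$. To reach the stated forms I would left-multiply each equation by $W^h$ and use $W^h W^{-h} = I_M$, which yields $W^h V_i \bm{w}_i = \bm{e}_i$ and $W^h V_{\z} W_{\z} = E_{\z}$. Since $W$ is nonsingular throughout problem \eqref{problem:maxdet}, $W^h$ is invertible and each left-multiplication is reversible; this establishes the equivalences ``$\Longleftrightarrow$'' in both directions rather than mere implications.

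The main obstacle I anticipate is purely bookkeeping: fixing a consistent Wirtinger convention for the matrix derivative $\frac{\partial g}{\partial W^\ast}$ (namely, which entry sits where, and whether the log-det gradient comes out as $W^{-h}$ or as $W^{-\top}$) so that the conjugate-transpose and the column selection via $\bm{e}_i$ and $E_{\z}$ align \emph{exactly} with \eqref{eq:KKT:wk}--\eqref{eq:KKT:Wz}. Once the convention is pinned down and verified on the log-determinant term, the remaining manipulations are routine linear algebra.
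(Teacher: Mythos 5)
Your proposal is correct. Note that the paper itself does not prove Lemma~\ref{lemma:KKT}; it defers to the cited references \cite{pham2001,degerine2004,degerine2006maxdet}, so there is no in-paper argument to compare against. Your Wirtinger-calculus derivation is exactly the standard one underlying those references: the identities $\frac{\partial}{\partial \bm{w}_i^\ast}(\bm{w}_i^h V_i \bm{w}_i) = V_i \bm{w}_i$, $\frac{\partial}{\partial W_{\z}^\ast}\trace(W_{\z}^h V_{\z} W_{\z}) = V_{\z} W_{\z}$, and $\frac{\partial}{\partial W^\ast}\log\det W^\ast = W^{-h}$ (columnwise, $-W^{-h}\bm{e}_i$ and $-W^{-h}E_{\z}$ for the $-2\log|\det W|$ term) combine to give $V_i\bm{w}_i = W^{-h}\bm{e}_i$ and $V_{\z}W_{\z} = W^{-h}E_{\z}$, and left-multiplication by the invertible $W^h$ is reversible, so the stated equivalences hold. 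Your appeal to nonsingularity of $W$ is legitimate here since it is both the paper's standing condition (C2) and implicit in the domain of $g$ (finiteness of $-\log|\det W|^2$), so the proof is complete as written.
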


To rigorously derive the algorithms, we always assume the following two technical but mild conditions (C1) and (C2) for problem \eqref{problem:maxdet}:%
\footnote{
If (C1) is violated, problem \eqref{problem:maxdet} has no optimal solutions and algorithms should diverge to infinity (see {\cite[Proposition 1]{ike2019ilrma}} for the proof).
Conversely, if (C1) is satisfied, it is guaranteed that problem \eqref{problem:maxdet} has an optimal solution by Proposition~\ref{prop:loss:lower-bounded} in Appendix~\ref{appendix:lemma}.
In practice, the number of frames $T$ exceeds that of sensors $M$, and (C1) holds in general.

Condition (C2) is satisfied automatically if we initialize $W$ as nonsingular.
Intuitively, singular $W$ implies $- \log | \det W | = +\infty$, which will never occur during optimization.
}
\begin{description}
\item[(C1)] $V_1, \ldots, V_K, V_{\z} \in \mathcal{S}_{+}^M$ are positive definite.
\item[(C2)] Estimates of $W \in \mathbb{C}^{M \times M}$ are always nonsingular during optimization.
\end{description}

\subsection{Conventional methods: IVA-IP1, IVA-IP2, and IVE-OC}
\label{sec:BSS-IP}

\subsubsection{IVA-IP1}
\label{sec:BSS-IP1}

Let $W_{\z} = [\bm{w}_{K + 1}, \ldots, \bm{w}_M]$.
As shown in Table~\ref{table:alg}, IVA-IP1~\cite{ono2011auxiva}
cyclically updates each separation filter $\bm{w}_1,\ldots,\bm{w}_M$ by solving
the following subproblem for each $i = 1,\ldots,M$ one by one:
\begin{align}
\label{problem:BSS-IP1}
\bm{w}_i \in \argmin_{\bm{w}_i} g (\bm{w}_1,\ldots,\bm{w}_M, r).
\end{align}
This can be solved under (C1) and (C2) by
\begin{align}
\label{eq:BSS-IP1:1}
\bm{u}_i &\leftarrow (W^h V_i)^{-1} \bm{e}_i \in \mathbb{C}^M,
\\
\label{eq:BSS-IP1:2}
\bm{w}_i &\leftarrow \bm{u}_i \left( \bm{u}_i^h V_i \bm{u}_i \right)^{-\frac{1}{2}} \in \mathbb{C}^M.
\end{align}
Here, we define $V_i \coloneqq V_{\z} \in \mathcal{S}_{++}^M$ for $i = K + 1, \ldots, M$, and the $i$th column of $W$ in \eqref{eq:BSS-IP1:1}, i.e., $\bm{w}_i$, is set to the current value before update.
When applied to the BSE problem, IVA-IP1's main drawback is that
the computation time increases significantly as $M$ gets larger since it updates $W_{\z}$ even though there is no need to extract the background noises.

\subsubsection{IVE-OC}
\label{sec:IVE-IP1:OC}

To accelerate IVA-IP1 when it is applied to the BSE problem, IVE-OC~\cite{scheibler2019overiva} updates
$W_{\z}$ using \eqref{eq:IVE-IP1:Im(Wz):1} and \eqref{eq:IVE-IP1:Im(Wz):2}
(see Section~\ref{sec:model:IVE-OC} and Algorithm~\ref{alg:IVE-IP1}).
Although this update rule seems heuristic, we reveal in Section~\ref{sec:IVE-IP1}
that IVE-OC can be comprehended in terms of BCD for the proposed IVE that does not rely on OC.

\subsubsection{IVA-IP2}
\label{sec:IVA-IP2}

When $(K, M) = (1, 2)$ or $(2, 2)$, i.e., when $W = [\bm{w}_1, \bm{w}_2]$, problem \eqref{problem:maxdet} can be solved directly (not iteratively) through a generalized eigenvalue problem~\cite{degerine2006maxdet,ono2012auxiva-stereo},
which is more efficient than IVA-IP1.
We extend this direct method to the case where $K = 1$ and general $M$ ($\geq 2$) in Section~\ref{sec:IVE-IP2:K=1}.

In a previous work~\cite{ono2018asj},
this IVA-IP2 was extended for the case where $K = M \geq 3$.
This algorithm, which is also called IVA-IP2, updates two separation filters, e.g., $\bm{w}_i$ and $\bm{w}_j$, in each iteration by exactly solving the following subproblem:
\begin{align}
(\bm{w}_i, \bm{w}_j) \in \argmin_{\bm{w}_i,\,\bm{w}_j} g (\bm{w}_1,\ldots,\bm{w}_M, r).
\end{align}
We extend this algorithm in Section~\ref{sec:IVE-IP2-new:K>1}.

\subsection{Proposed IVE-IP2 for the case of $K = 1$}
\label{sec:IVE-IP2:K=1}

We focus on the case where $K = 1$ and derive the proposed algorithm IVE-IP2%
\footnote{
The algorithm IVE-IP2 for $K = 1$ was developed independently and simultaneously by Scheibler--Ono (called the fast IVE or FIVE~\cite{scheibler2020fast}) and the authors~\cite{ike2020overiva} in the Proceedings of ICASSP2020.
In this paper, we give a complete proof for the derivation of IVE-IP2 and add some remarks
(see also Section~\ref{sec:source-extraction} for the projection back operation).
}
that is expected to be more efficient than IVE-IP1.
The algorithm is summarized in Algorithm~\ref{alg:IVE-IP2:K=1}.

When $(K, M) = (1, 2)$ or $(2, 2)$, problem \eqref{problem:maxdet} can be solved directly through a generalized eigenvalue problem~\cite{degerine2006maxdet,ono2012auxiva-stereo,ono2018asj}.
We here extend this direct method to the case where $K = 1$ and $M \geq 2$ in the following proposition.
\begin{proposition}
\label{prop:IVE-IP2:K=1}
Let $K = 1$, $M \geq 2$, and $V_1,V_{\z} \in \mathcal{S}_{++}^M$.
A matrix $W = [ \bm{w}_1, W_{\z} ] \in \GL(M)$ with $\bm{w}_1 \in \mathbb{C}^M$ satisfies
stationary conditions \eqref{eq:KKT:wk} and \eqref{eq:KKT:Wz} if and only if
\begin{align}
\label{eq:IVE-IP2:K=1:w1}
\bm{w}_1 &= \bm{u}_1 \left( \bm{u}_1^h V_1 \bm{u}_1  \right)^{- \frac{1}{2}} Q_1, \quad Q_1 \in \mathcal{U}(1),
\\
\label{eq:IVE-IP2:K=1:Wz}
W_{\z} &= U_{\z} \left( U_{\z}^h V_{\z} U_{\z} \right)^{- \frac{1}{2}} Q_{\z}, \quad Q_{\z} \in \mathcal{U}(M - 1),
\\
\label{eq:IVE-IP2:K=1:orth}
U_{\z} &\in \mathbb{C}^{M \times (M - 1)} \quad \text{with} \quad U_{\z}^h V_{\z} \bm{u}_1 = \bm{0}_{M - 1},
\\
\label{eq:IVE-IP2:K=1:eig}
\bm{u}_1 &\in \mathbb{C}^{M \times 1} \quad \text{with} \quad V_{\z} \bm{u}_1 = \lambda V_1 \bm{u}_1, 
\end{align}
where $\mathcal{U}(d)$ is the set of all unitary matrices of size $d \times d$.
Also, \eqref{eq:IVE-IP2:K=1:eig} is the generalized eigenvalue problem for $(V_{\z}, V_1)$
with the eigenvalue $\lambda \in \mathbb{R}_{> 0}$ and eigenvector $\bm{u}_1 \in \mathbb{C}^M$.

Moreover, if the generalized eigenvalue $\lambda \in \mathbb{R}_{> 0}$ in \eqref{eq:IVE-IP2:K=1:eig} is chosen as the largest one,
then any $W \in \GL(M)$ obtained by \eqref{eq:IVE-IP2:K=1:w1}--\eqref{eq:IVE-IP2:K=1:eig} 
is a global optimal solution for problem \eqref{problem:maxdet}.
\end{proposition}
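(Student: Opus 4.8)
The plan is to first unpack the two stationary conditions \eqref{eq:KKT:wk}--\eqref{eq:KKT:Wz} into explicit block identities, then prove the characterization \eqref{eq:IVE-IP2:K=1:w1}--\eqref{eq:IVE-IP2:K=1:eig} by a dimension argument, and finally pin down the optimal eigenvalue by evaluating $g$ directly at every stationary point. Writing $W = [\bm{w}_1, W_{\z}]$ and multiplying out, condition \eqref{eq:KKT:wk} with $i=1$ splits into $\bm{w}_1^h V_1 \bm{w}_1 = 1$ and $W_{\z}^h V_1 \bm{w}_1 = \bm{0}_{M-1}$, while condition \eqref{eq:KKT:Wz} splits into $W_{\z}^h V_{\z} \bm{w}_1 = \bm{0}_{M-1}$ and $W_{\z}^h V_{\z} W_{\z} = I_{M-1}$. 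I would immediately record two consequences: the two ``diagonal'' identities fix the quadratic terms of $g$, giving $\bm{w}_1^h V_1 \bm{w}_1 = 1$ and $\trace(W_{\z}^h V_{\z} W_{\z}) = M-1$, so these terms contribute the constant $M$ at any stationary point; and the two ``off-diagonal'' identities say that both $V_1 \bm{w}_1$ and $V_{\z} \bm{w}_1$ are orthogonal, in the standard Hermitian sense, to $\image W_{\z}$.

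For the equivalence, the key observation is that since $W \in \GL(M)$, the subspace $\image W_{\z}$ is $(M-1)$-dimensional, so its orthogonal complement is a line. Both $V_1 \bm{w}_1$ and $V_{\z} \bm{w}_1$ lie on this line, and $V_1 \bm{w}_1 \neq \bm{0}_M$ because $V_1 \in \mathcal{S}_{++}^M$ and $\bm{w}_1 \neq \bm{0}_M$; hence they are proportional, i.e. $V_{\z} \bm{w}_1 = \lambda V_1 \bm{w}_1$, which is \eqref{eq:IVE-IP2:K=1:eig}. Left-multiplying by $\bm{w}_1^h$ gives $\lambda = (\bm{w}_1^h V_{\z} \bm{w}_1)/(\bm{w}_1^h V_1 \bm{w}_1) > 0$. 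Once the pencil equation holds, $W_{\z}^h V_1 \bm{w}_1$ and $W_{\z}^h V_{\z} \bm{w}_1$ differ only by the factor $\lambda$, so the two orthogonality conditions collapse to the single condition \eqref{eq:IVE-IP2:K=1:orth}; I then recognize \eqref{eq:IVE-IP2:K=1:w1} as scaling $\bm{w}_1$ to unit $V_1$-norm (with free phase $Q_1$) and \eqref{eq:IVE-IP2:K=1:Wz} as a $V_{\z}$-orthonormalization of the columns of $W_{\z}$ (with free rotation $Q_{\z}$). The converse is direct substitution: the formulas yield $\bm{w}_1^h V_1 \bm{w}_1 = 1$, $W_{\z}^h V_{\z} W_{\z} = I_{M-1}$, and $W_{\z}^h V_{\z} \bm{w}_1 = \bm{0}_{M-1}$ by \eqref{eq:IVE-IP2:K=1:orth}, and nonsingularity of the resulting $W$ follows because $\bm{w}_1 = W_{\z} \bm{c}$ would force $\bm{c} = W_{\z}^h V_{\z} \bm{w}_1 = \bm{0}_{M-1}$, contradicting $\bm{w}_1^h V_1 \bm{w}_1 = 1$.

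For the optimality claim I would evaluate $g$ at an arbitrary stationary point. The quadratic terms already contribute $M$; for the log-determinant term I use that $W_{\z}^h V_{\z} \bm{w}_1 = \bm{0}_{M-1}$ and $W_{\z}^h V_{\z} W_{\z} = I_{M-1}$ make $W^h V_{\z} W$ block-diagonal with blocks $\bm{w}_1^h V_{\z} \bm{w}_1$ and $I_{M-1}$, so that $|\det W|^2 \det V_{\z} = \det(W^h V_{\z} W) = \bm{w}_1^h V_{\z} \bm{w}_1 = \lambda$, using the pencil equation and the normalization $\bm{w}_1^h V_1 \bm{w}_1 = 1$. Hence $g(W) = M - \log \lambda + \log \det V_{\z}$ at \emph{every} stationary point, so $g$ is minimized precisely by choosing $\lambda$ as the largest generalized eigenvalue of $(V_{\z}, V_1)$. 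To upgrade this to global optimality I invoke that a minimizer exists under (C1) by Proposition~\ref{prop:loss:lower-bounded}, and, since \eqref{problem:maxdet} is smooth and unconstrained over the open set $\GL(M)$, every global minimizer is a stationary point; therefore the global minimum value equals $\min_\lambda [\, M - \log \lambda + \log \det V_{\z} \,]$, attained by the largest-$\lambda$ solution.

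I expect the main obstacle to be the optimality half rather than the equivalence, which is essentially linear algebra: the decisive ingredients are the observation that the two quadratic terms are constant across stationary points and the clean identity $|\det W|^2 = \lambda / \det V_{\z}$, which together reduce the comparison of stationary points to the ordering of generalized eigenvalues, and then the existence result is needed to conclude that no non-stationary competitor can beat them. A minor technical point to address is that the pencil $(V_{\z}, V_1)$ of two matrices in $\mathcal{S}_{++}^M$ has $M$ real positive eigenvalues, so ``the largest one'' is well defined.
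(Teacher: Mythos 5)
Your proof is correct and follows essentially the same route as the paper's: unpacking the stationary conditions into block identities, using the one-dimensionality of the orthogonal complement of $\image W_{\z}$ to force $V_{\z}\bm{w}_1 \propto V_1 \bm{w}_1$, observing that the quadratic terms contribute the constant $M$ at every stationary point, reducing the log-determinant term to $\log\lambda$, and invoking Proposition~\ref{prop:loss:lower-bounded} so that a global minimizer exists and must be stationary. The only cosmetic difference is that you obtain $|\det W|^2 = \lambda/\det V_{\z}$ via block-diagonality of $W^h V_{\z} W$, whereas the paper substitutes the parametrization \eqref{eq:IVE-IP2:K=1:w1}--\eqref{eq:IVE-IP2:K=1:eig} directly into $|\det W|$; both yield the same evaluation $g = M - \log\lambda + \log\det V_{\z}$.
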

\begin{proof}
We first show that $W$ satisfies \eqref{eq:KKT:wk}--\eqref{eq:KKT:Wz} if and only if it is computed by \eqref{eq:IVE-IP2:K=1:w1}--\eqref{eq:IVE-IP2:K=1:eig}.

For the ``if'' part, observe that
$U_{\z} V_1 \bm{u}_1 = \bm{0}_{M - 1}$ holds by \eqref{eq:IVE-IP2:K=1:orth}--\eqref{eq:IVE-IP2:K=1:eig} and $\lambda \neq 0$.
Thus, $W$ surely satisfies \eqref{eq:KKT:wk}--\eqref{eq:KKT:Wz}.

We prove the ``only if'' part.
The stationary conditions~\eqref{eq:KKT:wk}--\eqref{eq:KKT:Wz} imply that vectors $V_1 \bm{w}_1$ and $V_{\z} \bm{w}_1$ are orthogonal to the subspace $\image W_{\z}$ of dimension $M - 1$.
Hence, it holds that $\bm{w}_1= c \bm{u}_1$ for some $c \in \mathbb{C}$, where $\bm{u}_1$ is given by \eqref{eq:IVE-IP2:K=1:eig}.
This $c$ is restricted by $\bm{w}_1^h V_1 \bm{w}_1 = 1$, and we obtain \eqref{eq:IVE-IP2:K=1:w1}.
In a similar manner, \eqref{eq:KKT:Wz} implies that vector $V_{\z} \bm{u}_1$ is orthogonal to $\image W_{\z}$ of dimension $M - 1$.
Hence, it holds that $W_{\z} = U_{\z} R$ for some $R \in \mathbb{C}^{(M - 1) \times (M - 1)}$, where $U_{\z}$ is given by \eqref{eq:IVE-IP2:K=1:orth}.
This $R$ is restricted by $W_{\z}^h V_{\z} W_{\z} = I_{M - 1}$, and we have \eqref{eq:IVE-IP2:K=1:Wz}.

We next show the latter statement.
By Proposition~\ref{prop:loss:lower-bounded} in Section~\ref{sec:IVE-IP1}, global optimal solutions exist, and they must satisfy the stationary conditions, which are equivalent to \eqref{eq:IVE-IP2:K=1:w1}--\eqref{eq:IVE-IP2:K=1:eig}.
Since \eqref{eq:IVE-IP2:K=1:w1}--\eqref{eq:IVE-IP2:K=1:eig} satisfy \eqref{eq:KKT:wk}--\eqref{eq:KKT:Wz},
the sum of the first and second terms of $g$ becomes $M$, which is constant.
On the other hand, for the $\log \det$ term, it holds that
\begin{align*}
| \det W |&= \left( \bm{u}_1^h V_1 \bm{u}_1 \right)^{-\frac{1}{2}} \cdot \det \left( U_{\z}^h V_{\z} U_{\z} \right)^{-\frac{1}{2}} \cdot | \det U |
\\
&= \sqrt{\lambda} \det \left( U^h V_{\z} U \right)^{-\frac{1}{2}} \cdot | \det U |
= \sqrt{\lambda} \det (V_{\z})^{-\frac{1}{2}},
\end{align*}
where we define $U \coloneqq [\bm{u}_1, U_{\z}]$ and use $U_{\z}^h V_{\z} \bm{u}_1 = \bm{0}_{M - 1}$ in the second equality.
Hence, the largest $\lambda$ leads to the smallest $g$, which concludes the proof.
\end{proof}

By Proposition~\ref{prop:IVE-IP2:K=1}, under condition (C1), a global optimal solution for problem~\eqref{problem:maxdet}
can be obtained by updating $W = [\bm{w}_1, W_{\z}]$ using \eqref{eq:IVE-IP2:K=1:w1}--\eqref{eq:IVE-IP2:K=1:eig} with $Q_1 = 1$ and $Q_{\z} = I_{M - 1}$
and choosing the generalized eigenvalue $\lambda$ in \eqref{eq:IVE-IP2:K=1:eig} as the largest one.
Moreover, this algorithm can be accelerated by omitting the computation of \eqref{eq:IVE-IP2:K=1:Wz}--\eqref{eq:IVE-IP2:K=1:orth} and updating only $\bm{w}_1$ according to \eqref{eq:IVE-IP2:K=1:w1} and \eqref{eq:IVE-IP2:K=1:eig}.
It is applicable for extracting the unique target-source signal because the formulas for computing $\bm{w}_1$, $V_1$, and $V_{\z}$ are independent of $W_{\z}$.
The obtained algorithm IVE-IP2 is shown in Algorithm~\ref{alg:IVE-IP2:K=1}.

Interestingly, because $V_{\z}$ and $V_1$ can be viewed as the covariance matrices of the mixture and noise signals,
the update rules \eqref{eq:IVE-IP2:K=1:w1} and \eqref{eq:IVE-IP2:K=1:eig} turn out to be a MaxSNR beamformer~\cite{vantrees2004,warsitz2007maxsnr}.
Hence, IVE-IP2 can be understood as a method that iteratively updates the MaxSNR beamformer $\bm{w}_1$
and target-source signal $\bm{s}_1$.

\subsection{Drawback of conventional IVE-IP2 for the case of $K \geq 2$}
\label{sec:IVE-IP2-old:K>1}

Suppose $2 \leq K < M$.
The conventional IVE-IP2 developed in {\cite[Algorithm 2]{scheibler2020ive}}
is a BCD algorithm that cyclically updates the pairs $(\bm{w}_1, W_{\z}), \ldots, (\bm{w}_K, W_{\z})$ by solving the following subproblem for each $i = 1,\ldots, K$ one by one:
\begin{align}
\label{problem:IVE-IP2}
(\bm{w}_i, W_{\z}) \in \argmin_{ \bm{w}_i,\, W_{\z} } g (\bm{w}_1, \ldots, \bm{w}_K, W_{\z}, r).
\end{align}
It was shown in \cite[Theorem~3]{scheibler2020ive} that a global optimal solution of problem \eqref{problem:IVE-IP2} can be obtained through a generalized eigenvalue problem.
In this paper, we simplify the result of \cite[Theorem~3]{scheibler2020ive} in the next proposition,
based on which we will speed up the conventional IVE-IP2 in Section~\ref{sec:IVE-IP2-new:K>1}.
\begin{proposition}
\label{prop:IVE-IP2:K>1}
Let $2 \leq K < M$, and $V_i,V_{\z} \in \mathcal{S}_{++}^M$.
A global optimal solution of problem~\eqref{problem:IVE-IP2} is obtained as
\begin{align}
\label{eq:IVE-IP2:K>1:wi}
\bm{w}_i &=  P_i \bm{b}_i \left( \bm{b}_i^h G_i \bm{b}_i  \right)^{- \frac{1}{2}} \in \mathbb{C}^{M \times 1},
\\
\label{eq:IVE-IP2:K>1:Wz}
W_{\z} &= P_{\z} B_{\z} \left( B_{\z}^h G_{\z} B_{\z} \right)^{- \frac{1}{2}} \in \mathbb{C}^{M \times (M - K)},
\\
\label{eq:IVE-IP2:K>1:P}
P_\ell &= \left((W')^h V_\ell \right)^{-1} [\, \bm{e}_i, E_{\z} \,] \in \mathbb{C}^{M \times (M - K +1)},
\\
\label{eq:IVE-IP2:K>1:G}
G_\ell &= P_\ell^h V_\ell P_\ell \in \mathcal{S}_{++}^{M - K + 1}, \quad \ell \in \{ i, \z \},
\\
\label{eq:IVE-IP2:K>1:W'}
W' &= [\, \bm{w}_1, \ldots, \bm{w}_i', \ldots, \bm{w}_K, W'_{\z} \,] \in \GL(M),
\\
\label{eq:IVE-IP2:K>1:orth}
B_{\z} &\in \mathbb{C}^{(M - K + 1) \times (M - K)} \quad \text{with} \quad B_{\z}^h G_{\z} \bm{b}_i = \bm{0}_{M - K},
\\
\label{eq:IVE-IP2:K>1:eig}
\bm{b}_i &\in \mathbb{C}^{(M - K + 1) \times 1} \quad \text{with} \quad G_i \bm{b}_i = \lambda_{\max} G_{\z} \bm{b}_i, 
\end{align}
where $E_{\z}$ is defined by \eqref{eq:Ez},
and $\bm{w}_i' \in \mathbb{C}^{M \times 1}$ and $W'_{\z} \in \mathbb{C}^{M \times (M - K)}$ in \eqref{eq:IVE-IP2:K>1:W'} are set arbitrarily as long as $W'$ is nonsingular (for instance they can be set to the current values under condition (C2)).
Also, \eqref{eq:IVE-IP2:K>1:eig} is the generalized eigenvalue problem for $(G_i,G_{\z})$,
and $\bm{b}_i$ is the eigenvector corresponding to the largest generalized eigenvalue $\lambda_{\max} \in \mathbb{R}_{> 0}$.
\end{proposition}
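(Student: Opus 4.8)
The plan is to reduce subproblem~\eqref{problem:IVE-IP2}, in which only $(\bm{w}_i, W_{\z})$ vary while the remaining columns $\bm{w}_j$ ($j \neq i$, $j \leq K$) are frozen, to a single instance of the already-established $K = 1$ case (Proposition~\ref{prop:IVE-IP2:K=1}) posed in the lower dimension $M - K + 1$, and then to invoke that proposition verbatim. By Lemma~\ref{lemma:KKT}, a minimizer of~\eqref{problem:IVE-IP2} must satisfy the stationary conditions~\eqref{eq:KKT:wk} (for this one index $i$) and~\eqref{eq:KKT:Wz}; reading these row-by-row against the block structure $W = [\bm{w}_1,\ldots,\bm{w}_K,W_{\z}]$ splits them into (a) \emph{orthogonality to the frozen columns}, namely $\bm{w}_j^h V_i \bm{w}_i = 0$ and $\bm{w}_j^h V_{\z} W_{\z} = \bm{0}$ for $j \neq i$, (b) the \emph{cross conditions} $W_{\z}^h V_i \bm{w}_i = \bm{0}$ and $\bm{w}_i^h V_{\z} W_{\z} = \bm{0}$, and (c) the \emph{normalizations} $\bm{w}_i^h V_i \bm{w}_i = 1$ and $W_{\z}^h V_{\z} W_{\z} = I_{M-K}$.

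First I would introduce the change of variables $\bm{w}_i = P_i \bm{b}_i$ and $W_{\z} = P_{\z} B_{\z}$ with $P_\ell = V_\ell^{-1} C$ and $C \coloneqq (W')^{-h}[\,\bm{e}_i, E_{\z}\,]$, exactly as in~\eqref{eq:IVE-IP2:K>1:P}. The point of building $P_\ell$ from $C$ is that, because $W'$ shares the frozen columns with $W$, the identity $(W')^h (W')^{-h} = I_M$ gives $\bm{w}_j^h (W')^{-h}[\,\bm{e}_i, E_{\z}\,] = \bm{0}$ for every frozen $j$; hence $\bm{w}_j^h V_\ell P_\ell = \bm{0}$, and group~(a) holds automatically for all $\bm{b}_i, B_{\z}$. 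The computation then rests on the four identities $P_i^h V_i P_i = P_i^h V_{\z} P_{\z} = C^h V_i^{-1} C = G_i$ and $P_{\z}^h V_{\z} P_{\z} = P_{\z}^h V_i P_i = C^h V_{\z}^{-1} C = G_{\z}$, which follow by substituting $P_\ell = V_\ell^{-1} C$ and cancelling. With these, the cross conditions~(b) collapse to $B_{\z}^h G_{\z} \bm{b}_i = \bm{0}_{M-K}$ and $B_{\z}^h G_i \bm{b}_i = \bm{0}_{M-K}$, the first of which is exactly~\eqref{eq:IVE-IP2:K>1:orth}; and since the eigen-relation~\eqref{eq:IVE-IP2:K>1:eig} makes $G_i \bm{b}_i$ parallel to $G_{\z} \bm{b}_i$, the two are equivalent. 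Positive definiteness $G_i, G_{\z} \in \mathcal{S}_{++}^{M-K+1}$ in~\eqref{eq:IVE-IP2:K>1:G} follows from $V_\ell \succ 0$ together with $C$ having full column rank (as $W'$ is nonsingular and $[\,\bm{e}_i, E_{\z}\,]$ has full column rank).

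Next I would reduce the objective itself. The two quadratic terms of $g$ that involve the free variables become $\bm{b}_i^h G_i \bm{b}_i + \trace(B_{\z}^h G_{\z} B_{\z})$, while for the coupling term I would factor $\det W = \det W' \cdot \det\!\big((W')^{-1} W\big)$ and note that, under $(W')^{-1}$, every frozen column becomes a standard basis vector $\bm{e}_j$; expanding the determinant along those columns leaves the $(M-K+1)\times(M-K+1)$ minor $C^h[\,\bm{w}_i, W_{\z}\,] = [\,G_i \bm{b}_i,\ G_{\z} B_{\z}\,]$, so that $|\det W| = |\det W'|\cdot|\det[\,G_i \bm{b}_i,\ G_{\z} B_{\z}\,]|$. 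A final linear substitution $\bm{c} = G_i \bm{b}_i$, $D = G_{\z} B_{\z}$ then turns the subproblem, up to an additive constant, into the minimization of $\bm{c}^h G_i^{-1} \bm{c} + \trace(D^h G_{\z}^{-1} D) - 2\log|\det[\,\bm{c}, D\,]|$, which is precisely the $K = 1$ objective~\eqref{eq:loss:MM} for the pair $(G_i^{-1}, G_{\z}^{-1})$ in dimension $M - K + 1$. Since both changes of variables are invertible, global optimality transfers back, and Proposition~\ref{prop:IVE-IP2:K=1} applies directly: the global minimizer is the largest generalized eigenpair of $G_{\z}^{-1} \bm{c} = \lambda G_i^{-1} \bm{c}$, which upon writing $\bm{c} = G_i \bm{b}_i$ is equivalent to $G_i \bm{b}_i = \lambda_{\max} G_{\z} \bm{b}_i$, i.e.~\eqref{eq:IVE-IP2:K>1:eig}, with the \emph{same} scalar $\lambda$; imposing normalizations~(c) and back-substituting through $\bm{w}_i = P_i \bm{b}_i$, $W_{\z} = P_{\z} B_{\z}$ recovers~\eqref{eq:IVE-IP2:K>1:wi}--\eqref{eq:IVE-IP2:K>1:Wz}.

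I expect the determinant reduction to be the main obstacle: one must argue cleanly that expanding $\det\!\big((W')^{-1}W\big)$ along the frozen (basis-vector) columns isolates exactly the minor $C^h[\,\bm{w}_i, W_{\z}\,]$, and one must keep the eigenvalue ordering straight through the two inversions---because passing to $(G_i^{-1}, G_{\z}^{-1})$ superficially swaps the roles of the ``source'' and ``noise'' matrices, yet the scalar $\lambda$ is preserved, so ``largest $\lambda$'' in Proposition~\ref{prop:IVE-IP2:K=1} remains ``largest $\lambda$'' in~\eqref{eq:IVE-IP2:K>1:eig}. A secondary check is that the arbitrary completion $W'$ in~\eqref{eq:IVE-IP2:K>1:W'} genuinely drops out, which it does since $\det W'$ enters $g$ only through the additive constant $-2\log|\det W'|$.
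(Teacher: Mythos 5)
Your reduction is, in substance, the paper's own proof of Proposition~\ref{prop:IVE-IP2:K>1}: the paper likewise parameterizes $\bm{w}_i = P_i G_i^{-1} \bm{c}_i$ and $W_{\z} = P_{\z} G_{\z}^{-1} C_{\z}$ (your $\bm{c} = G_i \bm{b}_i$, $D = G_{\z} B_{\z}$), collapses the objective to $\bm{c}_i^h G_i^{-1} \bm{c}_i + \trace ( C_{\z}^h G_{\z}^{-1} C_{\z} ) - 2 \log |\det [\, \bm{c}_i, C_{\z} \,]| + \const$ by exploiting $V_i P_i = (W')^{-h} [\, \bm{e}_i, E_{\z} \,] = V_{\z} P_{\z}$ (the content of your four identities), and then applies Proposition~\ref{prop:IVE-IP2:K=1} to the pencil $(G_i^{-1}, G_{\z}^{-1})$, mapping the eigenpair back through $\bm{b}_i = G_i^{-1} \bm{u}_i$, $B_{\z} = G_{\z}^{-1} U_{\z}$ exactly as you do. Your determinant computation via $\det W = \det W' \cdot \det ( (W')^{-1} W )$ and expansion along the basis-vector columns is a harmless variant of the paper's manipulation, and your eigenvalue bookkeeping (the scalar $\lambda$ is preserved under the inversions, so largest maps to largest) is correct.

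The one step that does not hold as written is ``since both changes of variables are invertible, global optimality transfers back.'' The map $(\bm{b}_i, B_{\z}) \mapsto (P_i \bm{b}_i, P_{\z} B_{\z})$ is injective but \emph{not} surjective: $P_\ell$ is $M \times (M - K + 1)$, so you are restricting problem~\eqref{problem:IVE-IP2} to a proper subset of its feasible set, and invertibility onto that image says nothing about optimality over the whole feasible set. What makes the restriction lossless are two facts that your opening paragraph gestures at but never wires into the conclusion: (i) problem~\eqref{problem:IVE-IP2} actually attains its minimum --- the paper gets this from Proposition~\ref{prop:loss:lower-bounded}, and without it ``a minimizer must satisfy the stationary conditions'' is vacuous; and (ii) the linear conditions in your group~(a) do not merely hold on your parameterized family, they \emph{characterize} it: since $W'$ is nonsingular, $C = (W')^{-h} [\, \bm{e}_i, E_{\z} \,]$ has full column rank and its columns span the orthogonal complement of the $K - 1$ frozen columns, so every $(\bm{w}_i, W_{\z})$ satisfying group~(a) is of the form $(P_i \bm{b}_i, P_{\z} B_{\z})$. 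With (i) and (ii) in place, a global minimizer exists, lies in your family, and there coincides with the point delivered by Proposition~\ref{prop:IVE-IP2:K=1}; this is precisely how the paper's proof is structured, so the fix is a two-line insertion rather than a change of strategy.
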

\begin{proof}
The proof is given in Appendix~\ref{appendix:proof-of-prop3}.
\end{proof}
In the conventional IVE-IP2, under condition (C2),
problem \eqref{problem:IVE-IP2} is solved by \eqref{eq:IVE-IP2:K>1:wi}--\eqref{eq:IVE-IP2:K>1:W'}
where $B_{i,\z} \coloneqq [\bm{b}_i, B_{\z}]$ is computed through the following generalized eigenvalue problem
instead of using \eqref{eq:IVE-IP2:K>1:orth}--\eqref{eq:IVE-IP2:K>1:eig}:
\begin{align}
\label{eq:IVE-IP2:K>1:old:GEV}
G_i B_{i,\z} = G_{\z} B_{i,\z} \diag \{ \lambda_\mathrm{max}, \lambda_2, \ldots,\lambda_{M - K + 1} \}.
\end{align}
Since the computational cost of solving \eqref{eq:IVE-IP2:K>1:old:GEV} is greater than computing \eqref{eq:IVE-IP2:K>1:orth}--\eqref{eq:IVE-IP2:K>1:eig},
we can speed up the conventional IVE-IP2, as shown in the next subsection.

\subsection{Proposed IVE-IP2 for $K \geq 2$}
\label{sec:IVE-IP2-new:K>1}

The proposed IVE-IP2 for $K \geq 2$, which is summarized in Algorithm~\ref{alg:IVE-IP2:K>1}, is an acceleration of the conventional IVE-IP2 described in the previous subsection.
We here provide another efficient formula to solve \eqref{problem:IVE-IP2}
without changing the behavior of the conventional IVE-IP2.

Thanks to newly providing Proposition~\ref{prop:IVE-IP2:K>1},
we can update the pair $(\bm{w}_i, W_{\z})$ according to \eqref{eq:IVE-IP2:K>1:wi}--\eqref{eq:IVE-IP2:K>1:eig} in which
only the first generalized eigenvector has to be computed.
Moreover, for the purpose of optimizing $\bm{w}_1,\ldots,\bm{w}_K$,
we do not have to update $W_{\z}$
using Eqs. \eqref{eq:IVE-IP2:K>1:Wz} and \eqref{eq:IVE-IP2:K>1:orth}
when solving problem \eqref{problem:IVE-IP2} for each $i = 1,\ldots,K$.
To see this, observe that 
\begin{enumerate}
\item in the MM algorithm (Eqs. \eqref{problem:maxdet}--\eqref{eq:MM:si}), the auxiliary variable $r$, and by extension, the covariance matrices $V_1, \ldots, V_K, V_{\z}$ are independent of $W_{\z}$; and hence,
\item $W_{\z}$ never contributes to the construction of the surrogate function $g$ during iterative optimization.
\end{enumerate}
This observation implies that problem \eqref{problem:IVE-IP2} remains the same during iterative optimization regardless whether we update $W_{\z}$ or not.
Hence, in IVE-IP2, it is sufficient to update only $\bm{w}_i$ in \eqref{problem:IVE-IP2} using Eqs. \eqref{eq:IVE-IP2:K>1:wi}, \eqref{eq:IVE-IP2:K>1:P}--\eqref{eq:IVE-IP2:K>1:W'},
\eqref{eq:IVE-IP2:K>1:eig}.

This advantage stems from the stationary Gaussian assumption for the noise components.
To the contrary, suppose that the noise components are nonstationary or non-Gaussian.
Then, the surrogate function $g$ as well as $V_{\z}$ should depend on $W_{\z}$ in the same way that $V_i$ depends on $\bm{w}_i$, meaning that $W_{\z}$ must be optimized in subproblem~\eqref{problem:IVE-IP2}.
In this way, the stationary Gaussian assumption is important to derive a computationally efficient algorithm for IVE.

\subsection{IVE-IP1: Reinterpretation of the conventional IVE-OC}
\label{sec:IVE-IP1}

In this subsection, we model the noise components by \eqref{eq:z:R} with a general time-independent covariance matrix $\Omega_{\z}$, i.e.,
\begin{align}
- \log p(\bm{z}(t)) = \bm{z}(t)^h \Omega_{\z}^{-1} \bm{z}(t) + \log \det \Omega_{\z} + \const,
\end{align}
and 
develop a BCD algorithm called IVE-IP1 that cyclically updates $\bm{w}_1,(W_{\z},\Omega_{\z}),\bm{w}_2,(W_{\z},\Omega_{\z}), \ldots,\bm{w}_K, (W_{\z},\Omega_{\z})$ by solving the following subproblems one by one:
\begin{align}
\label{problem:IVE-IP1:wk}
\bm{w}_i &\in \argmin_{\bm{w}_i} g (\bm{w}_1,\ldots,\bm{w}_K,W_{\z},\Omega_{\z},r),
\\
\label{problem:IVE-IP1:Wz}
(W_{\z},\Omega_{\z}) &\in \argmin_{W_{\z},\, \Omega_{\z}} g (\bm{w}_1,\ldots,\bm{w}_K,W_{\z},\Omega_{\z},r).
\end{align}
Here, the frequency bin index $f$ is omitted.
Due to considering the noise covariance $\Omega_{\z}$, the objective function $g$ is slightly changed to
\begin{alignat}{2}
\nonumber
g( W, \Omega_{\z}, r) &=&\,& \sum_{i = 1}^K \bm{w}_i^h V_i \bm{w}_i 
+ \trace \left( W_{\z}^h V_{\z} W_{\z} \Omega_{\z}^{-1} \right)
\\
\label{eq:obj:ip1}
&&&+ \log \det \Omega_{\z} - 2\log | \det W |.
\end{alignat}
It will turn out that the BCD algorithm IVE-IP1 is identical to the conventional IVE-OC summarized in Algorithm~\ref{alg:IVE-IP1}.
In other words, IVE-OC, which relies on the orthogonal constraint (OC), can be interpreted as a BCD algorithm for the proposed IVE without OC.

Since problem \eqref{problem:IVE-IP1:wk} is the same as problem \eqref{problem:BSS-IP1},
the update rules for $\bm{w}_1, \ldots, \bm{w}_K$ are given by \eqref{eq:BSS-IP1:1}--\eqref{eq:BSS-IP1:2}.
On the other hand, an algorithm for problem \eqref{problem:IVE-IP1:Wz} can be obtained from the next proposition.
\begin{proposition}
\label{prop:IVE-IP1:Wz}
Let $K, M \in \mathbb{N}$ with $K < M$, 
$V_{\z} \in \mathcal{S}_{++}^M$,
and let $W_{\s} = [\bm{w}_1,\ldots,\bm{w}_K] \in \mathbb{C}^{M \times K}$ be full column rank.
Then, a pair $(W_{\z}, \Omega_{\z})$ is a global optimal solution of problem \eqref{problem:IVE-IP1:Wz} 
if and only if it is computed by
\begin{align}
\label{eq:prop:Wz}
W_{\z} &\in \mathbb{C}^{M \times (M - K)} \quad \text{with} \quad W_{\s}^h V_{\z} W_{\z} = O_{K, M - K},
\\
\label{eq:prop:Wz:2}
\Omega_{\z} &= W_{\z}^h V_{\z} W_{\z} \in \mathcal{S}_{++}^{M - K}.
\end{align}
(Note that $W_{\z}$ must be full column rank to guarantee the positive definiteness of $\Omega_{\z}$.)
\end{proposition}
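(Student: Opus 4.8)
The plan is to optimize the two block variables in stages, first eliminating $\Omega_{\z}$ in closed form for a fixed $W_{\z}$ and then minimizing the resulting profiled objective over $W_{\z}$. Collecting the terms of \eqref{eq:obj:ip1} that depend on $\Omega_{\z}$, the inner problem is $\min_{\Omega_{\z} \succ 0} \trace(S \Omega_{\z}^{-1}) + \log\det\Omega_{\z}$ with $S \coloneqq W_{\z}^h V_{\z} W_{\z}$. I would invoke the standard matrix result that, for $S \in \mathcal{S}_{++}^{M-K}$, the map $\Omega \mapsto \trace(S \Omega^{-1}) + \log\det\Omega$ attains its unique minimum over $\mathcal{S}_{++}^{M-K}$ at $\Omega = S$, which is exactly \eqref{eq:prop:Wz:2}; substituting back leaves the minimized value $\trace I_{M-K} + \log\det S = (M - K) + \log\det(W_{\z}^h V_{\z} W_{\z})$. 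Note this step already requires $W_{\z}$ to be full column rank so that $S \succ 0$, consistent with the parenthetical remark in the statement.

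After this substitution, and discarding the terms $\sum_i \bm{w}_i^h V_i \bm{w}_i$ and $(M-K)$ that are constant in $W_{\z}$, the problem reduces to minimizing $\log\det(W_{\z}^h V_{\z} W_{\z}) - \log|\det W|^2$ over $W_{\z}$, where $W = [\,W_{\s}, W_{\z}\,]$. The key algebraic device I would use is the identity $|\det W|^2\det V_{\z} = \det(W^h V_{\z} W)$ together with the Schur-complement factorization of the block matrix $W^h V_{\z} W$ along its top-left $W_{\s}$-block. Writing $P \coloneqq W_{\z}^h V_{\z} W_{\z}$ and $C \coloneqq W_{\s}^h V_{\z} W_{\z}$, this yields $|\det W|^2 \det V_{\z} = \det(W_{\s}^h V_{\z} W_{\s})\cdot\det\!\big(P - C^h (W_{\s}^h V_{\z} W_{\s})^{-1} C\big)$. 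Substituting this and dropping the further $W_{\z}$-independent constants $\log\det(W_{\s}^h V_{\z} W_{\s})$ and $\log\det V_{\z}$, the objective collapses to $\log\det P - \log\det(P - Q)$ with $Q \coloneqq C^h (W_{\s}^h V_{\z} W_{\s})^{-1} C \succeq 0$.

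To finish, I would argue that this quantity is nonnegative and vanishes exactly on the desired set. Since $W$ is nonsingular and $V_{\z} \succ 0$, both $P$ and the Schur complement $P - Q$ are positive definite, and $Q \succeq 0$ gives $0 \prec P - Q \preceq P$; the Loewner-monotonicity of $\det$ on $\mathcal{S}_{++}$ then yields $\det(P - Q) \le \det P$, i.e. $\log\det P - \log\det(P-Q)\ge 0$, with equality if and only if $Q = 0$. Because $W_{\s}^h V_{\z} W_{\s}$ is invertible, $Q = 0$ is equivalent to $C = W_{\s}^h V_{\z} W_{\z} = O_{K,M-K}$, which is precisely \eqref{eq:prop:Wz}. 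Combining the two stages characterizes the global minimizers exactly by \eqref{eq:prop:Wz}--\eqref{eq:prop:Wz:2}, proving both the ``if'' and ``only if'' directions at once. The main obstacle I anticipate is the bookkeeping around the domain: I must confirm that the orthogonality condition \eqref{eq:prop:Wz}, together with full column rank of $W_{\s}$ and $W_{\z}$, already forces $W = [\,W_{\s}, W_{\z}\,]$ to be nonsingular (otherwise $\log|\det W|^2 = -\infty$ and the factorization degenerates). This follows since $V_{\z}$-orthogonality of the two full-rank blocks makes their images intersect only in $\{\bm{0}_M\}$, so that no feasibility gap opens up between the characterization and the constraint set.
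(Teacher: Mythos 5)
Your proof is correct, but it follows a genuinely different route from the paper's. The paper argues via stationarity: it writes down the first-order conditions $\partial g/\partial W_{\z}^\ast = O$ and $\partial g/\partial \Omega_{\z}^{-1} = O$ (which are exactly \eqref{eq:prop:Wz}--\eqref{eq:prop:Wz:2}), notes that global optima must be stationary, and then closes the gap in the converse direction by showing (i) $g$ takes the same value at every stationary point --- using that any two stationary $W_{\z}$'s span the same subspace, hence differ by a right factor $Q \in \GL(M-K)$ whose contributions to the $\log\det$ terms cancel --- and (ii) a minimizer exists, via the change of variables $W_{\z}' = W_{\z}\Omega_{\z}^{-1/2}$ and the existence result of Proposition~\ref{prop:loss:lower-bounded}. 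You instead eliminate $\Omega_{\z}$ in closed form with the Gaussian ML-covariance lemma, then reduce the profiled objective via $|\det W|^2 \det V_{\z} = \det(W^h V_{\z} W)$ and the Schur-complement factorization to $\log\det P - \log\det(P - Q)$ with $Q = C^h(W_{\s}^h V_{\z} W_{\s})^{-1}C \succeq 0$, and finish by Loewner monotonicity of the determinant: the objective is bounded below by an explicit constant, with equality precisely when $C = W_{\s}^h V_{\z} W_{\z} = O$. What your route buys is self-containedness: you need neither matrix-calculus stationary conditions nor the existence result Proposition~\ref{prop:loss:lower-bounded} (whose proof the paper omits, deferring to prior work), since your argument exhibits the optimal value directly and characterizes the equality set; your closing observation that \eqref{eq:prop:Wz} plus full column rank forces $[W_{\s}, W_{\z}]$ nonsingular is also the right bookkeeping. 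What the paper's route buys is thematic economy: the stationarity machinery and Proposition~\ref{prop:loss:lower-bounded} are reused across Propositions~\ref{prop:IVE-IP2:K=1} and \ref{prop:IVE-IP2:K>1}, and the invariance argument (constancy of $g$ on the stationary set) is the pattern the paper leans on throughout. Two very minor points you should make explicit if writing this up: a feasible pair does exist (take $W_{\z}$ to be any basis of the $V_{\z}$-orthogonal complement of $\image W_{\s}$), which is what pins the optimal value down to your constant in the ``only if'' direction; and rank-deficient $W_{\z}$ is disposed of because $g = +\infty$ there for every $\Omega_{\z}$.
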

\begin{proof}
The stationary condition which is the necessary optimality condition of problem \eqref{problem:IVE-IP1:Wz} is expressed as
\begin{alignat}{3}
\label{eq:prop-ip1:1}
\frac{\partial g}{\partial W_{\z}^\ast} &= O &~~
&\Longleftrightarrow&~~&
\begin{cases}
W_{\s}^h V_{\z} W_{\z} = O_{K, M - K},
\\
W_{\z}^h V_{\z} W_{\z} = \Omega_{\z},
\end{cases}
\\
\label{eq:prop-ip1:3}
\frac{\partial g}{\partial \Omega_{\z}^{-1}} &= O &~~
&\Longleftrightarrow&~~&
W_{\z}^h V_{\z} W_{\z} = \Omega_{\z}.
\end{alignat}
Hence, the ``only if'' part is obvious.

To see the ``if'' part, we show that all stationary points are globally optimal.
To this end, it suffices to prove that
\begin{description}
\item[(i)] $g$ takes the same value on all stationary points; and
\item[(ii)] $g$ attains its minimum at some $(W_{\z}, \Omega_{\z})$.
\end{description}

We first check (i).  
By~\eqref{eq:prop:Wz:2}, the second term of $g$ becomes
$\trace (W_{\z}^h V_{\z} W_{\z} \Omega_{\z}^{-1} ) = M - K$, which is constant.
Let $(W_{\z}, \Omega_{\z})$ and $(W_{\z}', \Omega_{\z}')$ be two stationary points.
Since $\image W_{\z} = \image W'_{\z}$ by the first part of \eqref{eq:prop-ip1:1}, we have $W_{\z}' = W_{\z} Q$ for some $Q \in \GL(M - K)$.
Then, by \eqref{eq:prop-ip1:3}, $\Omega_{\z}' = Q^h \Omega_{\z} Q$.
Hence, for the $\log \det$ terms of $g$, we have
\begin{alignat*}{2}
&&\,& \log \det \Omega_{\z}' - 2\log |\det [W_{\s}, W_{\z}'] |
\\
&=&& \log \det (Q^h \Omega_{\z} Q) - 2 \log |\det [W_{\s}, W_{\z}] \cdot \det Q|
\\
&=&& \log \det \Omega_{\z} - 2\log |\det [W_{\s}, W_{\z}] |,
\end{alignat*}
which concludes (i).

We next show (ii).
Let us change the variables from $W_{\z}$ and $\Omega_{\z}$ to $W_{\z}' = W_{\z} \Omega_{\z}^{- \frac{1}{2}}$ and $\Omega_{\z}$.
Then, the objective function $g$ with respect to $W_{\z}'$ and $\Omega_{\z}$ is expressed as
\begin{align*}
g(W_{\z}', \Omega_{\z}) =
\trace (W_{\z}'^h V_{\z} W_{\z}') - 2\log | \det [W_{\s}, W_{\z}' ] | + \const,
\end{align*}
which is independent of $\Omega_{\z}$.
By Proposition~\ref{prop:loss:lower-bounded},
the problem of minimizing $g$ with respect to $W_{\z}'$ attains its minimum at some $W_{\z}'$, which in turn implies that problem \eqref{problem:IVE-IP1:Wz} also attains its minimum at some $(W_{\z}, \Omega_{\z})$.
\end{proof}
Proposition~\ref{prop:IVE-IP1:Wz} implies that 
in problem \eqref{problem:IVE-IP1:Wz}
it is sufficient to optimize $\image (W_{\z})$, instead of $W_{\z}$, to satisfy the orthogonal constraint (OC), i.e., Eq. \eqref{eq:prop:Wz},
which is part of the stationary condition \eqref{eq:prop-ip1:1}.%
\footnote{
The observation that OC appears as part of the stationary condition was pointed out in \cite{scheibler2020ive}.
}
Since the update formula \eqref{eq:IVE-IP1:Im(Wz):1} and \eqref{eq:IVE-IP1:Im(Wz):2} for $W_{\z}$ in IVE-OC surely satisfies OC,
it is applicable for the solution of problem \eqref{problem:IVE-IP1:Wz}.
Hence, it turns out that the conventional IVE-OC summarized in Algorithm~\ref{alg:IVE-IP1} can be obtained as BCD for the proposed IVE.

\subsection{Source extraction and projection back}
\label{sec:source-extraction}

After optimizing separation matrix $W$ in IVE, we can achieve source extraction by computing the minimum mean square error (MMSE) estimator of the source spatial image $\bm{x}_i$ for each $i = 1,\ldots,K$:
\begin{align}
\label{eq:MMSE}
\hat{\bm{x}}_i(f,t) \coloneqq
\underbrace{ \left( W(f)^{-h} \bm{e}_i \right) }_{ \text{projection back} } 
\underbrace{ \left( \bm{w}_i(f)^h \bm{x}(f,t) \right) }_{ \text{source extraction} }.
\end{align}
The projection back operation~\cite{murata2001projection-back} adjusts the amplitude ambiguity of the extracted signals between frequency bins.

We here show that in the projection back operation, i.e., $W^{-h} \bm{e}_i$ for $i = 1,\ldots,K$,
we need only $\bm{w}_1,\ldots,\bm{w}_K, \image W_{\z}$ but not $W_{\z}$
(frequency bin index $f$ is dropped off for simplicity).
To see this, let 
\begin{align}
\label{eq:W^i}
W^{(i)} = [\bm{w}_1,\ldots,\bm{w}_K, W_{\z}^{(i)}] \in \GL(M), \quad i \in \{1, 2 \}
\end{align}
satisfy $\image W_{\z}^{(1)} = \image W_{\z}^{(2)}$.
Then, we have
\begin{align*}
W^{(1)} = W^{(2)} \begin{bmatrix}
I_K & O \\
O & D \\
\end{bmatrix}
~\text{for some $D \in \GL(M - K)$}.
\end{align*}
This implies $\left( W^{(1)} \right)^{-h} \bm{e}_i = \left( W^{(2)} \right)^{-h} \bm{e}_i$ for $i = 1,\ldots,K$,
meaning that the projection back operation depends only on $\image W_{\z}$, as we required.

\begin{remark}
\label{remark:projection-back}
Since the proposed IVE-IP2 will never update $W_{\z}$ from its initial value,
we update $\image W_{\z}$ using \eqref{eq:IVE-IP1:Im(Wz):1} and \eqref{eq:IVE-IP1:Im(Wz):2} after the optimization,
which corresponds to Step~\ref{step:Im(Wz)} in Algorithm~\ref{alg:main}.
This update rule is applicable for this purpose
by considering the stationary condition~\eqref{eq:KKT:Wz}, 
which is equivalent to~\eqref{eq:prop-ip1:1} with $\Omega_{\z} = I_{M - K}$.
\end{remark}

\section{Algorithm for semiblind BSE problem}
\label{sec:semi-BSE}

\subsection{Proposed Semi-IVE}
\label{sec:semi-ive}

We next address the semiblind source extraction problem (Semi-BSE) defined in Section~\ref{sec:problem}.
In Semi-BSE, we are given a priori the $L$ transfer functions (steering vectors) for target sources $l = 1,\ldots,L$ ($1 \leq L \leq K$), denoted as
\begin{align}
A_1(f) \coloneqq [\, \bm{a}_1(f), \ldots, \bm{a}_L(f) \,] \in \mathbb{C}^{M \times L}.
\end{align}
In this situation, it is natural in the context of the linear constrained minimum variance (LCMV~\cite{vantrees2004}) beamforming algorithm
to regularize the separation matrix as 
$W(f)^h A_1(f) = E_1$, where $E_1$ is defined by \eqref{eq:E1} below.
We can thus immediately propose the Semi-BSE method called Semi-IVE
as IVE with the linear constraints, i.e.,
\begin{align}
\label{problem:semi-BSE:main}
\left.
\begin{array}{cl}
\underset{W}{\minimize} & g_0 (W) \quad \text{(defined by \eqref{eq:loss})} \\
\subject-to & W(f)^h A_1(f) = E_1, \quad f = 1,\ldots,F.
\end{array}
\right\}
\end{align}
Here, $W \coloneqq \{ W(f) \}_{f = 1}^F$.

The optimization of Semi-IVE is also based on the MM algorithm described in Section~\ref{sec:BSE:MM}.
The surrogate function for Semi-IVE is the same as that of IVE, i.e., Eq. \eqref{eq:loss:MM},
and the problem of minimizing the surrogate function becomes the following constrained optimization problem for each frequency bin $f = 1,\ldots,F$:
\begin{align}
\label{problem:semi-BSE}
\left.
\begin{array}{cl}
\underset{W(f)}{\minimize} & g (W(f), r) \quad \text{(defined by \eqref{eq:loss:MM})} \\
\subject-to & W(f)^h A_1(f) = E_1.
\end{array}
\right\}
\end{align}
The goal of this section is to develop an efficient algorithm to obtain a (local) optimal solution for problem~\eqref{problem:semi-BSE}.

Hereafter, we omit frequency bin index $f$ and use the following notations for simplicity:
\begin{align}
\label{eq:E1}
E_1 &= [\, \bm{e}_1^{(M)}, \ldots,\bm{e}_L^{(M)} \,] = \begin{bmatrix}
I_L \\
O_{M - L, L}
\end{bmatrix} \in \mathbb{C}^{M \times L},
\\
\label{eq:E2}
E_2 &= [\, \bm{e}_{L + 1}^{(M)}, \ldots,\bm{e}_M^{(M)} \,] = \begin{bmatrix}
O_{L, M - L} \\
I_{M - L}
\end{bmatrix} \in \mathbb{C}^{M \times (M - L)},
\\
\label{eq:W2}
W_2 &= [\, \bm{w}_{L+1},\ldots,\bm{w}_K, W_{\z} \,] \in \mathbb{C}^{M \times (M - L)}.
\end{align}
Here, the superscript $\empty^{(d)}$ in $\bm{e}_i^{(d)} \in \mathbb{C}^d$ represents the length of the unit vectors.

\subsection{LCMV beamforming algorithm for $\bm{w}_1,\ldots,\bm{w}_L$}
\label{sec:semi-BSE:LCMV}

By applying Proposition~\ref{prop:det} in Appendix~\ref{appendix:lemma}, the objective function can be expressed as
\begin{alignat}{3}
\nonumber
g(W) &=&\,& \sum_{i = 1}^K \bm{w}_i^h V_i \bm{w}_i + \trace (W_{\z}^h V_{\z} W_{\z})
\\
\label{eq:loss:semi-BSE}
&&&+ \log \det \left( A_1^h A_1 \right) - \log \det \left( W_2^h W_2 \right).
\end{alignat}
Hence, in problem \eqref{problem:semi-BSE}, separation filters $\bm{w}_1,\ldots,\bm{w}_L$ that correspond to the given transfer functions $\bm{a}_1,\ldots,\bm{a}_L$ can be globally optimized by solving 
\begin{align}
\label{problem:BF:W}
\tag{LCMV}
\left.
\begin{array}{cl}
\underset{\bm{w}_i}{\minimize} & \bm{w}_i^h V_i \bm{w}_i \\
\subject-to & \bm{w}_i^h A_1 = (\bm{e}_i^{(L)})^\top \in \mathbb{C}^{1 \times L}.
\end{array}
\right\}
\end{align}
This problem is nothing but the LCMV beamforming problem that is solved as
\begin{align}
\bm{w}_i = V_i^{-1} A_1 \left(A_1^h V_i^{-1} A_1  \right)^{-1} \bm{e}_i^{(L)} \in \mathbb{C}^M.
\end{align}

\subsection{Block coordinate descent (BCD) algorithm for $W_2$}
\label{sec:Semi-IVE}

We develop a BCD algorithm for optimizing the remaining variables $W_2$.
Since $W_2 \in \mathbb{C}^{M \times (M - L)}$ is restricted by the $(M - L)L$ linear constraints $W_2^h A_1 = O_{M - L, L}$,
it can be parameterized using the $(M - L)^2$ variables.
One such choice is given by
\begin{align}
\label{eq:semi-BSE:W2}
W_2 &= W_2' \overline{W} \in \mathbb{C}^{M \times (M - L)}, \quad \overline{W} \in \mathbb{C}^{(M - L) \times (M - L)},
\\
\label{eq:semi-BSE:W2'}
W_2' &= [A_1, E_2]^{-h} E_2 \in \mathbb{C}^{M \times (M - L)},
\end{align}
where $E_2$ is defined as \eqref{eq:E2}, and it is assumed that $[A_1, E_2]$ is nonsingular.
It is easy to see that this $W_2$ certainly satisfies the linear constraints.
By substituting \eqref{eq:semi-BSE:W2}--\eqref{eq:semi-BSE:W2'} into \eqref{eq:loss:semi-BSE},
we can reformulate problem \eqref{problem:semi-BSE} as an unconstrained optimization problem of minimizing
\begin{align*}
\overline{g} (\overline{W}, r)
= \sum_{i = L + 1}^K \overline{\bm{w}}_i^h \overline{V}_{\!i} \overline{\bm{w}}_i
+ \trace (\overline{W}_{\!\z}^h \overline{V}_{\!\z} \overline{W}_{\!\z})
- 2 \log |\det \overline{W} |
\end{align*}
with respect to $\overline{W}$, where we define
\begin{align}
\overline{W} &= [\, \overline{\bm{w}}_{L + 1}, \ldots, \overline{\bm{w}}_K, \overline{W}_{\!\z} \,] \in \mathbb{C}^{(M - L) \times (M - L)},
\\
\overline{\bm{w}}_i &\in \mathbb{C}^{(M - L) \times 1}, \quad i = L + 1, \ldots, K,
\\
\overline{W}_{\!\z} &\in \mathbb{C}^{(M - L) \times (M - K)},
\\
\overline{V}_{\!i} &= \left( W_2' \right)^h V_i W_2' \in \mathcal{S}_{++}^{M - L}, ~~~ i \in \{ L + 1, \ldots, K, \z \}.
\end{align}
Interestingly, this problem is nothing but the BSE problem and was already discussed in Section~\ref{sec:BSE}.

In a similar manner to IVE-IP2, our proposed Semi-IVE algorithm updates $\overline{\bm{w}}_{L + 1},\ldots,\overline{\bm{w}}_{K}$ one by one by solving the following subproblem
for each $i = L + 1, \ldots, K$:
\begin{align}
\label{problem:Semi-IVE}
(\overline{\bm{w}_i}, \overline{W}_{\!\z}) \in \argmin_{\overline{\bm{w}}_i,\,\overline{W}_{\!\z}} \overline{g} (\overline{\bm{w}}_{L + 1},\ldots,\overline{\bm{w}}_K,\overline{W}_{\!\z}, r).
\end{align}
\begin{itemize}
\item When $L = K - 1$ and $\overline{W} = [\, \overline{\bm{w}}_K, \overline{W}_{\!\z} \,]$,
a global optimal solution for problem \eqref{problem:Semi-IVE}
can be obtained by applying Proposition~\ref{prop:IVE-IP2:K=1}.
\item When $1 \leq L \leq K - 2$,
a global optimal solution of \eqref{problem:Semi-IVE} can be obtained by applying Proposition~\ref{prop:IVE-IP2:K>1}.

\item Note that for the same reason as in the proposed IVE-IP2,
updating $\overline{W}_{\!\z}$ in \eqref{problem:Semi-IVE} is not necessary.
\end{itemize}
In summary, Semi-IVE can be presented in Algorithm~\ref{alg:Semi-IVE}, in which we use the following notations:
\begin{align}
I_{M - L} &= [\, \overline{\bm{e}}_{L + 1}, \ldots, \overline{\bm{e}}_K, \overline{E}_{\z} \,] \in \GL (M - L),
\\
\overline{\bm{e}}_i &= \bm{e}_{i - L}^{(M - L)} \in \mathbb{C}^{M - L}, \quad i = L + 1, \ldots, K,
\\
\overline{E}_{\z} &= [\, \bm{e}_{K - L + 1}^{(M - L)}, \ldots, \bm{e}_{M - L}^{(M - L)} \,] \in \mathbb{C}^{(M - L) \times (M - K)}.
\end{align}
\begin{remark}
Semi-IVE does not update $\overline{W}_{\!\z}$ and $W_{\z}$ from the initial values.
Hence, for the same reason as in Remark~\ref{remark:projection-back}, we need to optimize $\image \overline{W}_{\!\z}$ and $\image W_{\z}$ after the optimization and before performing the projection back operation.
For this purpose, we adopt the following formula:
\begin{align}
\label{eq:Semi-IVE:Im(Wz)}
W_{\z} = W_2' \overline{W}_{\!\z},\quad
\overline{W}_{\!\z} = \begin{bmatrix}
(\overline{W}_{\!\s}^h \overline{V}_{\!\z} \overline{E}_{\s})^{-1} (\overline{W}_{\!\s}^h \overline{V}_{\!\z} \overline{E}_{\z}) \\
-I_{M - K}
\end{bmatrix}.
\end{align}
Here, we use the following notations:
\begin{alignat*}{3}
\overline{W} &= [\, \overline{W}_{\!\s}, \overline{W}_{\!\z} \,],
&\quad
\overline{W}_{\!\s} &= [\, \overline{\bm{w}}_{L + 1}, \ldots, \overline{\bm{w}}_{K} \,],
\\
I_{M - L} &= [\, \overline{E}_{\s}, \overline{E}_{\z} \,],
&\quad
\overline{E}_{\s} &= [\, \overline{\bm{e}}_{L + 1}, \ldots, \overline{\bm{e}}_{K} \,].
\end{alignat*}
The formula \eqref{eq:Semi-IVE:Im(Wz)} optimizes $\image \overline{W}_{\!\z}$ (and hence $\image W_{\z}$)
as described in Remark~\ref{remark:projection-back}.
\end{remark}

\section{Computational time complexity}
\label{sec:computational-complexity}

We compare the computational time complexity per iteration of the algorithms presented in Table~\ref{table:alg}.
In IVE-IP1, IVE-IP2, and Semi-IVE, the runtime is dominated by
\begin{itemize}
\item the computation of $V_1,\ldots,V_K \in \mathcal{S}_{++}^M$ at Step \ref{step:V} in Algorithm~\ref{alg:main}, which costs $\mathrm{O}(K M^2 FT)$; or
\item the matrix inversions and generalized eigenvalue decompositions in Algorithms~\ref{alg:IVE-IP1}--\ref{alg:Semi-IVE}, which cost
$\mathrm{O}(K M^3 F)$.
\end{itemize}
Thus, the time complexity of these algorithms is 
\begin{align}
\mathrm{O}(K M^2 FT + K M^3 F).
\end{align}
On the other hand, the time complexity of IVA-IP1 and IVA-IP2 is
\begin{align}
\mathrm{O}(M^3 FT + M^4 F)
\end{align}
due to the computation of $M$ covariance matrices $V_1,\ldots,V_M$.
Consequently, IVE reduces the computational time complexity of IVA by a factor of $K / M$.

\begin{algorithm}[p]
{
\caption{IVE ($L = 0$) and Semi-IVE ($1 \leq L \leq K$)
based on generalized Gaussian distributions in Assumption~\ref{assumption:superGaussian}:
$G(\| \bm{s}_{i}(t) \|, \alpha_i) = \left( \frac{\| \bm{s}_i(t) \|}{\alpha_i} \right)^{\beta} + 2 F \log \alpha_i$.
Here, $\alpha_1,\ldots,\alpha_K$ are parameters to be optimized.
}
\label{alg:main}
\DontPrintSemicolon
{\setstretch{1.3}
\nl\KwData{Observed signal $\bm{x}$;}
\myinput{Number of sources $K$; and}
\myinput{$L$ transfer function $A_1 = [\bm{a}_1,\ldots,\bm{a}_L]$,}
\myinput{where $0 \leq L \leq K$.}
\nl\KwResult{Separated spatial images $\bm{x}_1,\ldots,\bm{x}_K$.}
}
{\setstretch{1.4}
\nl\Begin{
\tcc{Initialization}
\nl $W(f) \leftarrow -I_M$\;
\nl $V_{\z}(f) \leftarrow \frac{1}{T} \sum_{t = 1}^T \bm{x}(f,t) \bm{x}(f,t)^h$\;
\nl \If{using IVE-IP1 or IVE-IP2}{
    \nl Update $W_{\z}(f)$ using Step \ref{step:IVE-IP1:Wz} in Algorithm~\ref{alg:IVE-IP1}
}
\nl \If{using Semi-IVE}{
    \nl $W_2'(f) \leftarrow [\, A_1(f), E_2 \,]^{-h} E_2$ \;
    \nl $\overline{V}_{\!\z}(f) \leftarrow W_2'(f)^h V_{\z}(f) W_2'(f)$ \;
    \nl Update $W_{\z}(f)$ using \eqref{eq:Semi-IVE:Im(Wz)}\;
}
\tcc{Optimization}
\nl \Repeat{convergence}{
    \nl \For{$i = 1,\ldots,K$}{
    	\nl $s_i(f,t) \leftarrow \bm{w}_i(f)^h \bm{x}(f,t)$\;
    	\nl $r_i(t) \leftarrow \| \bm{s}_i(t) \|$\;
    	\nl $\alpha_i^{\beta} = \frac{\beta}{2F} (\frac{1}{T} \sum_t r_i(t)^\beta)$\;
    	\label{step:alpha}
    	\nl $\phi_i(t) \leftarrow \frac{G'(r_i(t), \alpha_i)}{2 r_i(t)}
    	    = \frac{\beta}{2} \frac{1}{\alpha_i^\beta r_i(t)^{2 - \beta}}$\;
    	\nl $\phi_i(t) \leftarrow \min \{ \phi_i(t), 10^5 \times \min \{ \phi_i(t) \}_{t = 1}^T \}$
    	\tcp{for numerical stability}
        \label{step:G'}
    	\nl $V_i(f) \leftarrow \frac{1}{T} \sum_t \phi_i(t) \bm{x}(f,t) \bm{x}(f,t)^h$\;
    	\label{step:V}
    	\nl $V_i(f) \leftarrow V_i(f) + 10^{-3} \trace\{ V_i(f) \} I_M$
    	\tcp{for numerical stability}
    }
    \nl Update $W(f)$ for each $f$ using IVE-IP1, IVE-IP2, or Semi-IVE.\;
}
\tcc{Source extraction}
\nl \If{using IVE-IP2}{
    \nl Update $W_{\z}(f)$ using Step \ref{step:IVE-IP1:Wz} in Algorithm~\ref{alg:IVE-IP1}
    \label{step:Im(Wz)}
}
\nl \If{using Semi-IVE}{
    \nl Update $W_{\z}(f)$ using \eqref{eq:Semi-IVE:Im(Wz)}\;
}
\nl $\bm{x}_i(f,t) \leftarrow \left( W(f)^{-h} \bm{e}_i \right) \bm{w}_i(f)^h  \bm{x}(f,t)$\;
}
}
}
\end{algorithm}
\begin{algorithm}[p]
\caption{IVE-IP1 (proposed in~\cite{scheibler2019overiva})}
\label{alg:IVE-IP1}
\setstretch{1.2}
\DontPrintSemicolon
\nl \For{$i = 1,\ldots,K$}{
\nl $\bm{u}_i(f) \leftarrow \left( W(f)^h V_i(f) \right)^{-1} \bm{e}_i$\;
\nl $\bm{w}_i(f) \leftarrow \bm{u}_i(f) \left( \bm{u}_i(f)^h V_i(f) \bm{u}_i(f) \right)^{-\frac{1}{2}}$\;
\nl
\label{step:IVE-IP1:Wz}
$W_{\z}(f) \leftarrow 
{\small
\begin{bmatrix}
(W_{\s}(f)^h V_{\z}(f) E_{\s})^{-1} (W_{\s}(f)^h V_{\z}(f) E_{\z}) \\
-I_{M - K}
\end{bmatrix}
}$\;
}
\end{algorithm}
\begin{algorithm}[p]
{
\caption{IVE-IP2 for $K = 1$}
\label{alg:IVE-IP2:K=1}
\setstretch{1.2}
\DontPrintSemicolon
\nl Solve $V_{\z}(f) \bm{u} = \lambda_{\max} V_1(f) \bm{u}$ to obtain the eigenvector $\bm{u}$ corresponding to the largest eigenvalue $\lambda_{\max} $. \;
\nl $\bm{w}_1(f) \leftarrow \bm{u} \left( \bm{u}^h V_1(f) \bm{u}  \right)^{- \frac{1}{2}}$\;
}
\end{algorithm}
\begin{algorithm}[p]
{
\caption{IVE-IP2 for $K \geq 2$}
\label{alg:IVE-IP2:K>1}
\setstretch{1.2}
\DontPrintSemicolon
\nl \For{$i = 1,\ldots,K$}{
\nl \For{$\ell \in \{i, \z \}$}{
	\nl $P_\ell(f) \leftarrow \left( W(f)^h V_\ell(f) \right)^{-1} [\, \bm{e}_i, E_{\z} \,]$ \;
	\nl $G_\ell(f) \leftarrow P_\ell(f)^h V_\ell(f) P_\ell(f)$ \;
}
\nl Solve $G_i(f) \bm{b} = \lambda_{\max} G_{\z}(f) \bm{b}$ to obtain $\bm{b}$ corresponding to the largest eigenvalue $\lambda_{\max}$. \;
\nl $\bm{w}_i(f) \leftarrow P_i(f) \bm{b} \left( \bm{b}^h G_i(f) \bm{b} \right)^{-\frac{1}{2}}$
}
}
\end{algorithm}
\begin{algorithm}[p]
{
\caption{Semi-IVE}
\label{alg:Semi-IVE}
\setstretch{1.2}
\DontPrintSemicolon
\tcc{LCMV beamforming}
\nl \For{$i = 1,\ldots,L$}{
	\nl {\small $\bm{w}_i(f) \leftarrow V_i(f)^{-1} A_1(f) \left( A_1(f)^h V_i(f)^{-1} A_1(f) \right)^{-1} \bm{e}_i$}\;
}
\nl \If{$L = K$}{
	return
}

\tcc{BCD}
\nl \For{$i = L + 1,\ldots,K$}{
\nl $\overline{V}_{\!i}(f) \leftarrow W_2'(f)^h V_i(f) W_2'(f)$ \;
}
\nl \If{$L = K - 1$}{
	\nl Solve $\overline{V}_{\!\z}(f) \overline{\bm{u}} = \lambda_{\max} \overline{V}_{\!K}(f) \overline{\bm{u}}$ to obtain $\overline{\bm{u}}$
		corresponding to the largest eigenvalue $\lambda_{\max}$. \;
	\nl $\bm{w}_K(f) \leftarrow W_2'(f) \overline{\bm{u}} \left( \overline{\bm{u}}^h \overline{V}_{\!K}(f) \overline{\bm{u}} \right)^{-\frac{1}{2}}$
}
\nl \Else{
	\nl \For{$i = L + 1,\ldots,K$}{
		\nl \For{$\ell \in \{i, \z \}$}{
			\nl $\overline{P}_\ell(f) \leftarrow \left( \overline{W}(f)^h \overline{V}_\ell(f) \right)^{-1} [\, \overline{\bm{e}}_i, \overline{E}_{\z} \,]$ \;
			\nl $\overline{G}_\ell(f) \leftarrow \overline{P}_\ell(f)^h \overline{V}_\ell(f) \overline{P}_\ell(f)$ \;
		}
		\nl Solve $\overline{G}_i(f) \overline{\bm{b}} = \lambda_{\max} \overline{G}_{\z}(f) \overline{\bm{b}}$ to obtain $\overline{\bm{b}}$ corresponding to the largest eigenvalue $\lambda_{\max}$. \;
		\nl $\bm{w}_i(f) \leftarrow W_2'(f) \overline{P}_i(f) \overline{\bm{b}} \left( \overline{\bm{b}}^h \overline{G}_i(f) \overline{\bm{b}} \right)^{-\frac{1}{2}}$
}
}
}
\end{algorithm}

\section{Experiments}
\label{sec:exp}

In this numerical experiment, we evaluated the following properties of the IVE and Semi-IVE algorithms: 
\begin{itemize}
\item The source extraction performance for the speech signals in terms of the signal-to-distortion ratio (SDR~\cite{vincent2006sdr}) between the estimated and oracle source spatial images;
\item the runtime performance.
\end{itemize}
We compared the performance of the following five methods whose optimization procedures are summarized in Table~\ref{table:alg}:
\begin{enumerate}
\item \textbf{IVA-IP1-old}:
The conventional AuxIVA with IP1~\cite{ono2011auxiva}, followed by picking $K$ signals in an oracle manner.
\item \textbf{IVE-IP1-old}:
The conventional IVE-IP1 or IVE-OC~\cite{scheibler2019overiva} (Algorithms~\ref{alg:main} and \ref{alg:IVE-IP1}).
\item \textbf{IVE-IP2-old}:
The conventional IVE-IP2~\cite{scheibler2020ive}
(see~{\cite[Algorithm 2]{scheibler2020ive}} for the implementation when $K \geq 2$).
\item \textbf{IVE-IP2-new}:
The proposed IVE-IP2 (Algorithms~\ref{alg:main}, \ref{alg:IVE-IP2:K=1}, and \ref{alg:IVE-IP2:K>1}),
where the generalized eigenvalue problems will be solved using the power method with 30 iterations
(see Section~\ref{sec:exp-implementation} for details).%
\footnote{
Note that in the case of $K = 1$, IVE-IP2-old (called FIVE~\cite{scheibler2020fast}) and IVE-IP2-new are the same algorithm (see Section~\ref{sec:IVE-IP2:K=1}).
However, we distinguish between them because we propose to use the power method to efficiently obtain the first generalized eigenvector in the proposed IVE-IP2-new. 
}
\item \textbf{Semi-IVE-$(L)$-new}:
The proposed semiblind IVE algorithm, where the transfer functions of $L$ super-Gaussian sources are given as an oracle
(Algorithms~\ref{alg:main} and \ref{alg:Semi-IVE}).
\end{enumerate}

\subsection{Dataset}
\label{sec:exp-data}

As evaluation data, we generated synthesized convolutive noisy mixtures of speech signals.

\textit{Room impulse response (RIR) data}:
We used the RIR data recorded in room \textsf{E2A} from the RWCP Sound Scene Database in Real Acoustical Environments~\cite{rwcp}.
The reverberation time ($\mathrm{RT}_{60}$) of room \textsf{E2A} is 300 ms.
These data consist of nine RIRs from nine different directions.

\textit{Speech data}:
We used point-source speech signals from the test set of the TIMIT corpus~\cite{timit}.
We concatenated the speech signals from the same speaker so that the length of each signal exceeded ten seconds. 
We prepared 168 speech signals in total.

\textit{Noise data}:
We used a background noise signal recorded in a cafe (\textsf{CAF}) from the third `CHiME' Speech Separation and Recognition Challenge~\cite{chime3}.
We chose a monaural signal captured at ch1 on a tablet and used it as a point-source noise signal.
The noise signal was about 35 minutes long.

\textit{Mixture signals}:
We generated 100 mixtures consisting of $K \in \{1, 2, 3\}$ speech signals and $J = 5$ noise signals:
\begin{enumerate}
\item We selected $K$ speech signals at random from the 168 speech data.
We selected $J$ non-overlapping segments at random from the noise data and prepared $J$ noise signals.
We selected $K + J$ RIRs at random from the original nine RIRs.
\item We convolved the $K$ speech and $J$ noise signals with the selected $K + J$ RIRs to create $K + J$ spatial images. 
\item We added the obtained $K + J$ spatial images in such a way that 
$\mathrm{SNR} \coloneqq 10 \log_{10} \frac{ \frac{1}{K} \sum_{i = 1}^K \lambda_i^{(\mathrm{s})} }{ \sum_{j = 1}^J \lambda_j^{(\mathrm{n})} }$ becomes 0 dB if $K = 1,2$ and 5 dB if $K = 3$,
where $\lambda_i^{(\mathrm{s})}$ and $\lambda_j^{(\mathrm{n})}$ denote the sample variances of the $i$th speech-source and $j$th noise-source spatial images.
\end{enumerate}

\subsection{Experimental conditions}
\label{sec:exp-cond}

For all the methods, we initialized $W(f) = - I_M$ and
assumed that each super-Gaussian source $\bm{s}_i(t) \in \mathbb{C}^F$ follows the generalized Gaussian distribution (GGD).
More concretely, we set to $G(r_i(t), \alpha_i) = (\frac{r_i(t)}{\alpha_i})^{\beta} + 2 F \log \alpha_i$ with shape parameter $\beta = 0.1$ in Assumption~\ref{assumption:superGaussian}.
Scale parameters $\alpha_1,\ldots,\alpha_K \in \mathbb{R}_{> 0}$ are optimized to
$\alpha_i^{\beta} = \frac{\beta}{2F} (\frac{1}{T} \sum_t r_i(t)^\beta)$
every after $W$ is updated (Step~\ref{step:alpha} in Algorithm~\ref{alg:main}).

In Semi-IVE-$(L)$-new,
we prepared oracle transfer functions $\bm{a}_1,\ldots,\bm{a}_L$ using oracle spatial images $\bm{x}_1,\ldots,\bm{x}_L$. 
We set $\bm{a}_\ell (f)$ to the first eigenvector of 
sample covariance matrix 
$R_\ell(f) = \frac{1}{T} \sum_{t = 1}^T \bm{x}_\ell(f,t) \bm{x}_\ell(f,t)^h$
for each source $\ell \in \{ 1,\ldots,L \}$.
Here, $\bm{a}_\ell(f)$ was normalized to be $\| \bm{a}_\ell(f) \| = 1$.

The performance was tested for one to three speakers and two to eight microphones.
The sampling rate was 16 kHz,
the reverberation time was 300 ms,
the frame length was 4096 (256 ms),
and the frame shift was 1024 (64 ms).

\subsection{Implementation notes}
\label{sec:exp-implementation}

We implemented all the algorithms in Python 3.7.1.

In IVE-IP2-new and Semi-IVE, we need to solve the generalized eigenvalue problems of form
$A \bm{x} = \lambda_{\mathrm{max}} B \bm{x}$ where we require only the first generalized eigenvector.
To prevent `for loops' in the Python implementation, we solved them 
(i) by first transforming the problem into an eigenvalue problem
$\left( B^{-1} A \right) \bm{x} = \lambda_{\mathrm{max}} \bm{x}$,
and then (ii) using power iteration (also known as the power method~\cite{atkinson2008}) to obtain the required first eigenvector.
The number of iterations in the power method was set to 30 in this experiment.

On the other hand, in IVE-IP2-old for $K \geq 2$, all the generalized eigenvectors have to be obtained.
To prevent `for loops,' we implemented it
(i) by first transforming the generalized eigenvector problem into an eigenvector problem in the same way as above,
and then (ii) by calling the \textsf{numpy.linalg.eig} function.
IVE-IP2-old for $K = 1$ (i.e., FIVE~\cite{scheibler2020fast}) was implemented in the same way.

These implementations are not recommended for numerical stability,
but we adopted them to make the algorithms run fast.

\begin{figure*}[p]
\begin{center}
\begin{subfigure}{1.0\textwidth}
\centering
\includegraphics[width=\textwidth]{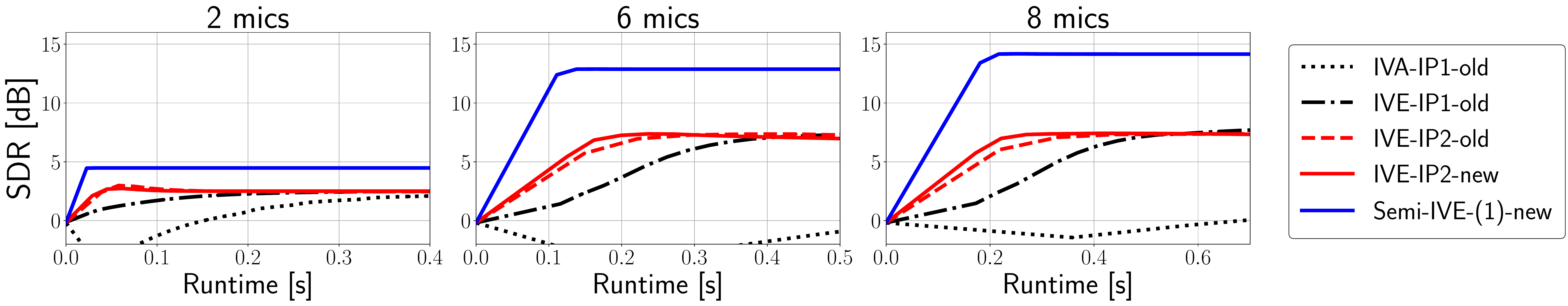}
\caption{One target source ($K = 1$), SNR = 0 [dB], averaged signal length: 11.46 sec}
\vspace{3 mm}
\end{subfigure}
\begin{subfigure}{1.0\textwidth}
\centering
\includegraphics[width=\textwidth]{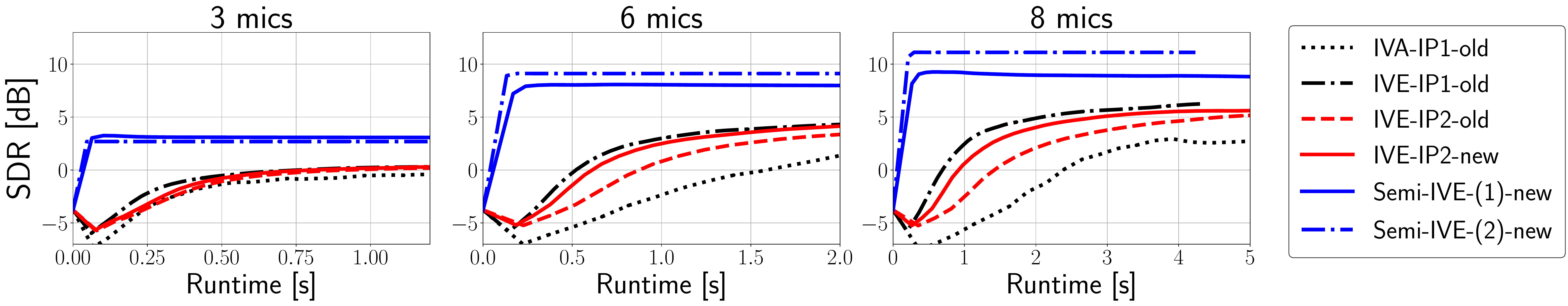}
\caption{Two target sources ($K = 2$), SNR = 0 [dB], averaged signal length: 12.85 sec}
\vspace{3 mm}
\end{subfigure}
\begin{subfigure}{1.0\textwidth}
\centering
\includegraphics[width=\textwidth]{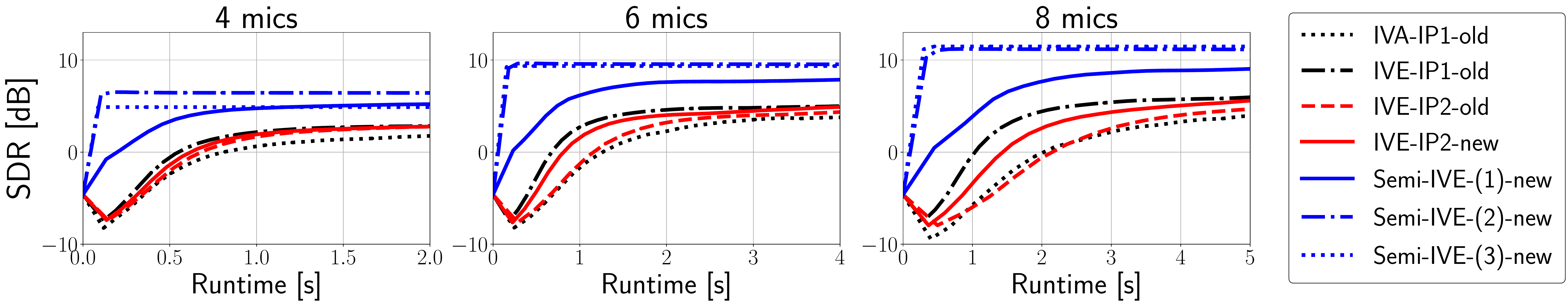}
\caption{Three target sources ($K = 3$), SNR = 5 [dB], averaged signal length: 12.92 sec}
\vspace{3 mm}
\end{subfigure}
\end{center}
\vspace{-4 mm}
\caption{
SDR [dB] performance as a function of runtime.
Results shown here are averaged over 100 mixtures.
}
\label{fig:SDR}
\end{figure*}
\begin{table*}[p]
\begin{center}
{
\caption{Realtime factors of algorithms when number of iterations is set to 50}
\label{table:RTF}
\begin{tabular}{c|ccc|ccc|ccc} \hline
Number of sources & \multicolumn{3}{c|}{$K = 1$} & \multicolumn{3}{c|}{$K = 2$} & \multicolumn{3}{c}{$K = 3$}
\\ \hline
Number of microphones & $M = 2$ & $M = 6$ & $M = 8$ & $M = 3$ & $M = 6$ & $M = 8$ & $M = 4$ & $M = 6$ & $M = 8$
\\ \hline
Signal length [sec] & 11.77 & 11.77 & 11.77 & 12.43 & 12.43 & 12.43 & 12.81 & 12.81 & 12.81
\\ \hline \hline
IVA-IP1-old & 0.08 & 0.58 & 0.99 & 0.18 & 0.61 & 1.05 & 0.31 & 0.70 & 1.20
\\
IVE-IP1-old & 0.05 & 0.13 & 0.18 & 0.14 & 0.27 & 0.36 & 0.28 & 0.44 & 0.67
\\
IVE-IP2-old & 0.07 & 0.32 & 0.48 & 0.21 & 0.59 & 0.93 & 0.38 & 0.81 & 1.45
\\
IVE-IP2-new & 0.06 & 0.16 & 0.22 & 0.19 & 0.40 & 0.59 & 0.35 & 0.59 & 1.00
\\ \hline
Semi-IVE-$(1)$-new & 0.04 & 0.12 & 0.17 & 0.15 & 0.29 & 0.41 & 0.32 & 0.54 & 0.89
\\
Semi-IVE-$(2)$-new & - & - & - & 0.14 & 0.26 & 0.35 & 0.28 & 0.44 & 0.70
\\
Semi-IVE-$(3)$-new & - & - & - & - & - & - & 0.28 & 0.42 & 0.65
\\ \hline
\end{tabular}
}
\end{center}
\end{table*}

\subsection{Experimental results}
\label{sec:exp-res}

Figure~\ref{fig:SDR} shows the SDR performance of each algorithm as a function of the runtime,
and Table~\ref{table:RTF} presents the realtime factor (RTF) of the algorithms when the number of iterations is set to 50 for all the algorithms.
These results were obtained by running the algorithms on a PC with 
Intel(R) Core(TM) i7-7820X CPU @ 3.60GHz using a single thread.

\subsubsection{Effectiveness of IVE-IP2-new for extracting unique target source ($K = 1$)}
In Fig.~\ref{fig:SDR}(a), IVE-IP2-new showed faster convergence than the conventional algorithms while achieving a similar SDR performance.
Interestingly, the full-blind IVE algorithms in the $M \geq 6$ cases produced rather better SDRs than the informed Semi-IVE-$(1)$-new in the $M = 2$ case.
This result motivates us to use more microphones at the expense of increased runtime.
As desired, these increased runtime of the IVE algorithms, especially the proposed IVE-IP2-new, was shown to be very small compared to IVA-IP1-old.

\subsubsection{Effectiveness of IVE-IP2-new compared to IVE-IP2-old}
From Fig.~\ref{fig:SDR} and Table~\ref{table:RTF}, it can be confirmed for $M \geq 6$ that the computational cost of the proposed IVE-IP2-new is consistently smaller than IVE-IP2-old while maintaining its separation performance,
which shows the effectiveness of IVE-IP2-new.
On the other hand, if $M$ is small, IVE-IP2-new gave almost the same performance as IVE-IP2-old.

\subsubsection{Comparison of IVA-IP1-old, IVE-IP1-old, and IVE-IP2-new}
If $K = 1$ (Fig.~\ref{fig:SDR}(a)), the proposed IVE-IP2-new gave the fastest convergence among other blind algorithms, which clearly shows the superiority of IVE-IP2-new.
On the other hand, if $K \in \{2, 3\}$ (Fig.~\ref{fig:SDR}(b)--(c)), the convergence of IVE-IP2-new is slower than that of IVE-IP1-old,
mainly because the computational cost per iteration of IVE-IP2-new exceedes that of IVE-IP1-old (see Table~\ref{table:RTF}).
Therefore, IVE-IP1-old is still important when extracting more than two sources.

\subsubsection{Effectiveness of Semi-IVE-$(L)$-new}
In Fig.~\ref{fig:SDR}, the proposed Semi-IVE algorithms naturally outperformed all of the full-blind IVA and IVE algorithms.
Surprisingly, when the $L \coloneqq K - 1$ transfer functions are given a priori (and $M \geq 4$),
the Semi-IVE algorithms (Semi-IVE-$(1)$-new if $K = 2$ and Semi-IVE-$(2)$-new if $K = 3$)
achieved comparable or sometimes better performance than Semi-IVE-$(K)$-new.
The convergence of Semi-IVE-$(L)$-new with $L \geq K - 1$ was also extremely fast.
These results clearly show the effectiveness of the proposed Semi-IVE algorithms.

\section{Concluding remarks}
\label{sec:conclusion}

We presented two new efficient BCD algorithms for IVE:
(i) IVE-IP2, which extracts all the super-Gaussian sources from a linear mixture in a fully blind manner, and
(ii) Semi-IVE, which improves IVE-IP2 in the semiblind scenario in which the transfer functions for several super-Gaussian sources are available as prior knowledge.
We also argued that the conventional IVE that relies on the orthogonal constraint (IVE-OC) can be interpreted as BCD for IVE (IVE-IP1).
Due to the stationary Gaussian noise assumption,
these BCD (or IP) algorithms can skip most of the optimization of the filters for separating the noise components, which plays a central role for achieving a low computational cost of optimization.

Our numerical experiment, which extracts speech signals from their noisy reverberant mixture, showed that when $K = 1$ or given at least $K - 1$ transfer functions in a semiblind case, the proposed IVE-IP2 and Semi-IVE resulted in significantly faster convergence compared to the conventional algorithms,
where $K$ is the number of the target-source signals.
The new IVE-IP2 consistently speeded up the old IVE-IP2,
and the conventional IVE-IP1 remains important for extracting multiple sources.

\appendices
\section{}
\label{appendix:lemma}

We prepare several propositions that are needed to rigorously develop the algorithms in Sections~\ref{sec:BSE} and \ref{sec:semi-BSE}.

Proposition~\ref{prop:MM} below gives an inequality which is the basis of the MM algorithm for AuxICA~\cite{ono2010auxica}, AuxIVA~\cite{ono2011auxiva}, and (the auxiliary-function-based) IVE. 
\begin{proposition}[See {\cite[Theorem~1]{ono2010auxica}}]
\label{prop:MM}
Let $G \colon \mathbb{R}_{\geq 0} \to \mathbb{R}$ be differentiable and satisfy
that $\frac{G'(r)}{r}$ is nonincreasing on $r \in \mathbb{R}_{> 0}$.
Then, for arbitrary $r, \widetilde{r} \in \mathbb{R}_{> 0}$, it holds that
\begin{align}
G(r) \leq \frac{G'(\widetilde{r})}{2\widetilde{r}} \cdot r^2
+ \left( G(\widetilde{r}) - \frac{\widetilde{r} \cdot G'(\widetilde{r})}{2} \right).
\end{align}
The inequality holds with equality if $r = \widetilde{r}$.
\end{proposition}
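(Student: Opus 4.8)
The plan is to fix $\widetilde{r} \in \mathbb{R}_{>0}$ and regard the asserted right-hand side as a single-variable function of $r$,
\[
Q(r) \coloneqq \frac{G'(\widetilde{r})}{2\widetilde{r}} \cdot r^2 + \left( G(\widetilde{r}) - \frac{\widetilde{r} \cdot G'(\widetilde{r})}{2} \right),
\]
and to prove that the gap $D(r) \coloneqq Q(r) - G(r)$ is nonnegative on $(0,\infty)$ and vanishes at $r = \widetilde{r}$. The equality claim is the easy part and I would dispose of it first: substituting $r = \widetilde{r}$ gives $Q(\widetilde{r}) = \tfrac{\widetilde{r} G'(\widetilde{r})}{2} + G(\widetilde{r}) - \tfrac{\widetilde{r} G'(\widetilde{r})}{2} = G(\widetilde{r})$, so $D(\widetilde{r}) = 0$.

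For the inequality, since $G$ is differentiable I would differentiate $D$ to obtain
\[
D'(r) = \frac{G'(\widetilde{r})}{\widetilde{r}} \cdot r - G'(r),
\]
and note that $D'(\widetilde{r}) = 0$, so $\widetilde{r}$ is a critical point of $D$. The key step is then to read off the sign of $D'$ from the hypothesis that $\psi(r) \coloneqq G'(r)/r$ is nonincreasing on $(0,\infty)$. Writing, for $r > 0$,
\[
D'(r) = r \left( \frac{G'(\widetilde{r})}{\widetilde{r}} - \frac{G'(r)}{r} \right) = r \bigl( \psi(\widetilde{r}) - \psi(r) \bigr),
\]
monotonicity gives $\psi(r) \geq \psi(\widetilde{r})$ whenever $0 < r \leq \widetilde{r}$ and $\psi(r) \leq \psi(\widetilde{r})$ whenever $r \geq \widetilde{r}$. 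Since $r > 0$, this yields $D'(r) \leq 0$ on $(0,\widetilde{r}]$ and $D'(r) \geq 0$ on $[\widetilde{r},\infty)$.

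Consequently $D$ is nonincreasing up to $\widetilde{r}$ and nondecreasing thereafter, so it attains its global minimum over $(0,\infty)$ at $r = \widetilde{r}$, where its value is $D(\widetilde{r}) = 0$. Hence $D(r) \geq 0$ for every $r > 0$, which is exactly the claimed inequality, and equality at $r = \widetilde{r}$ has already been verified.

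I expect the only genuinely substantive step to be the translation of the monotonicity assumption into the sign change of $D'$ across $\widetilde{r}$, via the factorization $D'(r) = r(\psi(\widetilde{r}) - \psi(r))$; the remainder is routine substitution together with the elementary fact that a function which decreases and then increases is minimized at its turning point. One minor point worth recording is that the whole argument uses only $r, \widetilde{r} > 0$, so $\psi$ is well defined and the divisions are legitimate throughout.
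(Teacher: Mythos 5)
Your proof is correct. The paper itself gives no proof of this proposition---it defers to Theorem~1 of the cited AuxICA reference---and your argument (writing the gap $D(r)$ between the quadratic majorizer and $G$, factoring $D'(r) = r\bigl(\psi(\widetilde{r}) - \psi(r)\bigr)$ with $\psi(r) = G'(r)/r$, and using the monotonicity of $\psi$ to conclude that $D$ attains its minimum value $0$ at $r = \widetilde{r}$) is essentially the same standard single-variable argument used there.
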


To give a proof of Propositions~\ref{prop:IVE-IP2:K=1}, \ref{prop:IVE-IP2:K>1}, and \ref{prop:IVE-IP1:Wz}, we need the following proposition that provides a sufficient condition for the existence of globally optimal solutions.
\begin{proposition}
\label{prop:loss:lower-bounded}
Suppose $V_1,\ldots,V_n \in \mathcal{S}_{++}^M$ and $1 \leq n \leq M$.
Let $W_2 \in \mathbb{C}^{M \times (M - n)}$ be full column rank.
The function $g$ with respect to
$W_1 = [\bm{w}_{1},\ldots,\bm{w}_n] \in \mathbb{C}^{M \times n}$,
defined by
\begin{align*}
\nonumber
g(W_1) &= 
\sum_{i = 1}^n \bm{w}_i^h V_i \bm{w}_i - \log | \det [W_1,W_2] |^2,
\end{align*}
 is lower bounded and attains its minimum at some $W_1$.
\end{proposition}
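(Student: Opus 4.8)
The plan is to establish both claims—lower-boundedness and attainment of the minimum—by controlling the two competing terms of $g$ separately. The quadratic term $\sum_i \bm{w}_i^h V_i \bm{w}_i$ is nonnegative and grows like $\|\bm{w}_i\|^2$ because each $V_i$ is positive definite, whereas the $-\log|\det|^2$ term can only drive $g$ downward when $\det[W_1,W_2]$ is large, which in turn forces the columns $\bm{w}_i$ to be large. The whole point is that quadratic growth dominates logarithmic growth, so $g$ cannot run off to $-\infty$.

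First I would lower-bound the quadratic part. Setting $\mu \coloneqq \min_i \lambda_{\min}(V_i) > 0$, positive definiteness gives $\bm{w}_i^h V_i \bm{w}_i \ge \mu \|\bm{w}_i\|^2$ for each $i$. Next I would upper-bound the determinant by Hadamard's inequality applied to the $M \times M$ matrix $[W_1,W_2]$, whose columns are $\bm{w}_1,\ldots,\bm{w}_n$ together with the (fixed) columns of $W_2$:
\[
|\det[W_1,W_2]| \le \Big(\prod_{i=1}^n \|\bm{w}_i\|\Big)\cdot C_2,
\]
where $C_2 > 0$ is the product of the Euclidean norms of the columns of $W_2$ (positive because $W_2$ has full column rank, so no column vanishes). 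On the domain $\mathcal{D} = \{ W_1 : \det[W_1,W_2] \neq 0 \}$ every $\|\bm{w}_i\|$ is strictly positive, so I may take logarithms. Combining the two bounds and recalling $-\log|\det|^2 = -2\log|\det|$ yields
\[
g(W_1) \ge \sum_{i=1}^n \big( \mu \|\bm{w}_i\|^2 - 2\log\|\bm{w}_i\| \big) - 2\log C_2 .
\]
Lower-boundedness then reduces to a one-variable fact: the scalar map $r \mapsto \mu r^2 - 2\log r$ on $r>0$ tends to $+\infty$ as $r \to 0^+$ and as $r \to \infty$, hence attains a finite minimum. Summing these finite lower bounds over $i$ shows $g$ is bounded below on $\mathcal{D}$, which is the first claim.

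For attainment I would argue by coercivity together with the Weierstrass extreme value theorem. The displayed inequality shows that on any sublevel set $\{ g \le \alpha \}$ each $\|\bm{w}_i\|$ is bounded above (since $\mu r^2 - 2\log r \to \infty$), so the sublevel set is bounded in $\mathbb{C}^{M \times n}$. It remains to rule out escape to the boundary of $\mathcal{D}$: if $W_1$ approaches a matrix with $\det[W_1,W_2]=0$ while staying bounded, the quadratic term stays bounded but $-\log|\det[W_1,W_2]|^2 \to +\infty$, so $g \to +\infty$. Consequently the sublevel sets of $g$ are closed subsets of $\mathbb{C}^{M \times n}$ that avoid the singular locus, hence compact, and $g$—continuous on $\mathcal{D}$—attains its minimum there.

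The main obstacle is this last point: the interplay with the singular boundary $\det[W_1,W_2]=0$. A minimizing sequence must be kept both bounded (handled by the quadratic term) and bounded away from the set where the determinant vanishes (handled by the blow-up of the logarithm). Once both mechanisms are in place the compactness argument is routine; the only care needed is to confirm that they act simultaneously, i.e.\ that the lower bound above is uniform enough to make the sublevel sets genuinely compact in $\mathbb{C}^{M \times n}$ rather than merely relatively closed in $\mathcal{D}$.
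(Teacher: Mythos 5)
Your proof is correct, and it is genuinely different from what the paper does: the paper does not actually prove this proposition, but instead cites prior work (D\'egerine--Za\"idi, Yeredor) for the square case $n = M$ and asserts that the case $1 \leq n < M$ ``can be proved in the same way,'' omitting all details. Your argument supplies exactly what is omitted, and does so uniformly in $n$: the spectral bound $\bm{w}_i^h V_i \bm{w}_i \geq \mu \|\bm{w}_i\|^2$ with $\mu = \min_i \lambda_{\min}(V_i) > 0$, Hadamard's inequality (in which the fixed block $W_2$ enters only through the constant $C_2 > 0$, the product of its column norms, nonzero by full column rank), and the scalar coercivity of $r \mapsto \mu r^2 - 2 \log r$ together give both lower-boundedness and bounded sublevel sets; your closure argument --- a bounded sequence in a sublevel set cannot accumulate on the singular locus $\det[W_1, W_2] = 0$, since there $-\log|\det|^2 \to +\infty$ while the quadratic term stays bounded --- correctly upgrades relative closedness in $\mathcal{D} = \{W_1 : \det[W_1,W_2] \neq 0\}$ to genuine compactness in $\mathbb{C}^{M \times n}$, so Weierstrass applies. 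What your route buys is self-containedness and an explicit treatment of the rectangular case with the fixed block $W_2$, which the cited references do not literally cover; what the paper's route buys is brevity. The only detail worth adding to your write-up is the trivial remark that $\mathcal{D}$ is nonempty (the columns of $W_2$, having full column rank, extend to a basis of $\mathbb{C}^M$, and the completing vectors furnish an admissible $W_1$), so that a nonempty compact sublevel set exists for the extreme value theorem to act on; also, when extracting an upper bound on a single $\|\bm{w}_j\|$ from the summed inequality, one uses that every other summand is bounded below by the finite minimum $1 + \log \mu$ of the scalar map --- this is implicit in your argument and harmless, but deserves one sentence.
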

\begin{proof}
The statement for $n = M$ was proved in~\cite{degerine2004,degerine2006maxdet,yeredor2009HEAD,yeredor2012SeDJoCo}.
Since that for $1 \leq n < M$ can also be proved in the same way as the former part, we omit the proof.
\end{proof}
Proposition~\ref{prop:det} below gives a useful expression of $|\det W|^2$.
The same statement for $L = 1$ in Proposition~\ref{prop:det} can be found in \cite{nakatani2019wpd-ml},
and we here extend it for general $1 \leq L \leq M$.
\begin{proposition}
\label{prop:det}
Let
$1 \leq L < M$ and
\begin{align*}
W &= \begin{bmatrix} W_1, W_2\end{bmatrix} \in \mathbb{C}^{M \times M},
\\
W_1 &\in \mathbb{C}^{M \times L}, \quad W_2 \in \mathbb{C}^{M \times (M - L)},
\\
A_1 &\in \mathbb{C}^{M \times L}, \quad W_1^h A_1 = I_L, \quad W_2^h A_1 = O_{M - L,L}.
\end{align*}
Then, $|\det W |^2 = \det \left( A_1^h A_1 \right)^{-1} \det \left( W_2^h W_2 \right)$ holds.
\end{proposition}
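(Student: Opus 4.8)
The plan is to introduce the auxiliary square matrix $[A_1, W_2] \in \mathbb{C}^{M \times M}$ (which is square since $A_1$ contributes $L$ columns and $W_2$ the remaining $M - L$). Forming products against this matrix converts the two constraints into block-triangular and block-diagonal products that can be read off by inspection. Throughout I would use $|\det W|^2 = \det(W^h W)$ and multiplicativity of the determinant.

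First I would compute the mixed product $W^h [A_1, W_2]$. Writing $W^h = \begin{bmatrix} W_1^h \\ W_2^h \end{bmatrix}$ and invoking the hypotheses $W_1^h A_1 = I_L$ and $W_2^h A_1 = O_{M-L,L}$, this product equals $\begin{bmatrix} I_L & W_1^h W_2 \\ O & W_2^h W_2 \end{bmatrix}$, which is block upper-triangular. Hence $\det\!\left(W^h [A_1, W_2]\right) = \det(W_2^h W_2)$. By multiplicativity this reads $\overline{\det W}\cdot \det[A_1, W_2] = \det(W_2^h W_2)$, and taking squared moduli gives $|\det W|^2 \, |\det[A_1, W_2]|^2 = \det(W_2^h W_2)^2$, where the right-hand side is real because $W_2^h W_2$ is Hermitian.

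Next I would evaluate the Gram matrix $[A_1, W_2]^h [A_1, W_2]$. Since $W_2^h A_1 = O$ also forces $A_1^h W_2 = (W_2^h A_1)^h = O$, both off-diagonal blocks vanish and the matrix is block-diagonal, $\begin{bmatrix} A_1^h A_1 & O \\ O & W_2^h W_2 \end{bmatrix}$, so that $|\det[A_1, W_2]|^2 = \det(A_1^h A_1)\,\det(W_2^h W_2)$. Substituting this into the relation from the previous step yields $|\det W|^2 \,\det(A_1^h A_1)\,\det(W_2^h W_2) = \det(W_2^h W_2)^2$.

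To finish, I would note that $W_1^h A_1 = I_L$ forces $A_1$ to have full column rank, so $\det(A_1^h A_1) > 0$; dividing through by $\det(A_1^h A_1)\det(W_2^h W_2)$ then gives the claim whenever $W_2$ has full column rank. The only point requiring care is the degenerate case $\det(W_2^h W_2) = 0$, where the columns of $W_2$ are dependent; then $W = [W_1, W_2]$ is singular and both sides of the identity vanish, so it holds trivially. This boundary case is the sole obstacle, as the main argument is otherwise a direct block-matrix computation driven entirely by the insight of multiplying by $[A_1, W_2]$.
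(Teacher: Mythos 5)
Your proof is correct, and it takes a genuinely different route from the paper's. The paper constructs the orthogonal projection $P = A_1 (A_1^h A_1)^{-1} A_1^h$ onto $\image A_1$, splits $W_1 = P W_1 + (I - P) W_1$, drops the $(I-P)W_1$ part by column operations (using that the columns of $W_2$ span $(\image A_1)^\perp$), and then evaluates the Gram determinant of $[\,P W_1, W_2\,]$, which is block diagonal with $W_1^h P W_1 = (A_1^h A_1)^{-1}$. You instead never form a projection: you multiply $W^h$ against the auxiliary matrix $[A_1, W_2]$ to get a block upper-triangular product with determinant $\det(W_2^h W_2)$, compute the Gram determinant of $[A_1, W_2]$ (block diagonal by the orthogonality hypothesis), and divide. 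Your approach buys a cleaner treatment of the degenerate case: the paper's column-operation step implicitly needs $\image W_2 = \image(I-P)$, i.e.\ full column rank of $W_2$, and the paper does not separately address rank deficiency, whereas you isolate exactly that case and observe both sides vanish. The trade-off is that your argument requires the division step (hence the case split), while the paper's computes $|\det W|^2$ directly with no division; both are elementary determinant manipulations of comparable length.
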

\begin{proof}
The orthogonal projection onto $\image A_1$ is given by
$P = A_1 (A_1^h A_1)^{-1} A_1^h \in \mathbb{C}^{M \times M}$.
Then, it holds that
\begin{align*}
| \det W |^2 
&= \left| \det ([\, P W_1 + (I - P) W_1, W_2 \,]) \right|^2
\\
&= \left| \det ([\, P W_1, W_2 \,]) \right|^2
\\
&= \det ([\, P W_1, W_2 \,]^h [\, P W_1, W_2 \,])
\\
&= \det \left(
\begin{bmatrix}
W_1^h P^h P W_1 & O_{L, M - L} \\
O_{M - L, L} & W_2^h W_2
\end{bmatrix}
\right)
\\
&= \det (A_1^h A_1)^{-1} \cdot \det (W_2^h W_2),
\end{align*}
where we use $\image (I - P) \supseteq \image W_2$ in the second equality.
\end{proof}

\section{Proof of Proposition~\ref{prop:IVE-IP2:K>1}}
\label{appendix:proof-of-prop3}

\begin{proof}
The proof is very similar to that of \cite[Theorem~3]{scheibler2020ive}.
By Proposition~\ref{prop:loss:lower-bounded}, problem \eqref{problem:IVE-IP2} has global optimal solutions,
and they satisfy the stationary conditions \eqref{eq:KKT:wk}--\eqref{eq:KKT:Wz}.
In Eqs.~\eqref{eq:KKT:wk}--\eqref{eq:KKT:Wz}, the first $K$ rows except for the $i$th row is linear with respect to $\bm{w}_i$ and $W_{\z}$,
and so they can be solved as
\begin{alignat}{3}
\label{eq:IVE-IP:K>1:101}
\bm{w}_i &= P_i G_i^{-1} \bm{c}_i,
&\quad
\bm{c}_i &\in \mathbb{C}^{(M - K + 1) \times 1},
\\
\label{eq:IVE-IP:K>1:102}
W_{\z} &= P_{\z} G_{\z}^{-1} C_{\z},
&\quad
C_{\z} &\in \mathbb{C}^{(M - K + 1) \times (M - K)},
\end{alignat}
where $P_\ell$ and $G_\ell$ for $\ell \in \{i, \z \}$ are defined by \eqref{eq:IVE-IP2:K>1:P}--\eqref{eq:IVE-IP2:K>1:G},
and $\bm{c}_i$ and $C_{\z}$ are free variables (the roles of $G_i^{-1}$ and $G_{\z}^{-1}$ will be clear below).
Substituting \eqref{eq:IVE-IP:K>1:101}--\eqref{eq:IVE-IP:K>1:102} into the objective function $g$, we have
\begin{align}
\nonumber
g  &= \bm{c}_i^h G_i^{-1} \bm{c}_i + \trace \{ C_i^h G_{\z}^{-1} C_{\z} \}
\\
\label{eq:prop:phi}
&\quad - \log | \det [\, W_0 \mid P_i G_i^{-1} \bm{c}_i \mid P_{\z} G_{\z}^{-1} C_{\z} \,] |^2 + \const,
\end{align}
where $W_0 \coloneqq [\, \bm{w}_1, \ldots,\bm{w}_{i - 1}, \bm{w}_{i + 1}, \ldots,\bm{w}_K \,] \in \mathbb{C}^{M \times (K - 1)}$.

Now, let us consider minimizing $g$ with respect to $\bm{c}_i$ and $C_{\z}$.
In \eqref{eq:prop:phi}, the $\log \det$ term can be simplified as
\begin{alignat*}{3}
&&&2 \log | \det
\begin{bmatrix}
W_0 \mid P_i G_i^{-1} \bm{c}_i \mid P_{\z} G_{\z}^{-1} C_{\z} 
\end{bmatrix}
|
\\
&=&\,&
2 \log | \det
\begin{bmatrix}
W_0^h
\\
P_i^h V_i
\end{bmatrix}^{-1}
\begin{bmatrix}
W_0^h
\\
P_i^h V_i
\end{bmatrix}
\begin{bmatrix}
W_0 \mid P_i G_i^{-1} \bm{c}_i \mid P_{\z} G_{\z}^{-1} C_{\z} 
\end{bmatrix}
|
\\
&=&~&
2 \log | \det \begin{bmatrix}
W_0^h W_0 & * & * \\
O_{M - K + 1, K - 1} & \bm{c}_i & C_{\z}
\end{bmatrix}
|
+ \const
\\
&=&~&
2 \log | \det [\, \bm{c}_i, C_{\z} \,] | + \const
\end{alignat*}
Here, we used $V_i P_i = \left( W' \right)^{-h} [\, \bm{e}_i, E_{\z} \,] = V_{\z} P_{\z}$ twice in the second equality.
Hence, by applying Proposition~\ref{prop:IVE-IP2:K=1}, $g$ attains its minimum when
\begin{align}
\bm{c}_i &= \bm{u}_i \left( \bm{u}_i^h G_i^{-1} \bm{u}_i  \right)^{- \frac{1}{2}},
\quad C_{\z} = U_{\z} \left( U_{\z}^h G_{\z}^{-1} U_{\z} \right)^{- \frac{1}{2}},
\\
\label{eq:IVE-IP2:prop:Uz}
U_{\z} &\in \mathbb{C}^{(M - K + 1) \times (M - K)} \quad \text{with} \quad U_{\z}^h G_{\z}^{-1} \bm{u}_i = \bm{0},
\\
\label{eq:IVE-IP2:prop:ui}
\bm{u}_i &\in \mathbb{C}^{(M - K + 1) \times 1} \quad \text{with} \quad G_{\z}^{-1} \bm{u}_i = \lambda_{\max} G_i^{-1} \bm{u}_i, 
\end{align}
where $\lambda_{\max}$ in \eqref{eq:IVE-IP2:prop:ui} denotes the largest generalized eigenvalue.
Because \eqref{eq:IVE-IP2:prop:Uz}--\eqref{eq:IVE-IP2:prop:ui}
are equivalent to \eqref{eq:IVE-IP2:K>1:orth}--\eqref{eq:IVE-IP2:K>1:eig} through
$\bm{b}_i = G_i^{-1} \bm{u}_i$ and $B_{\z} = G_{\z}^{-1} U_{\z}$,
a global optimal solution for \eqref{problem:IVE-IP2} can be obtained by
\begin{align*}
\bm{w}_i &= P_i G_i^{-1} \bm{u}_i \left( \bm{u}_i^h G_i^{-1} \bm{u}_i  \right)^{- \frac{1}{2}}
= P_i \bm{b}_i (\bm{b}_i^h G_i \bm{b}_i)^{-\frac{1}{2}},
\\
W_{\z} &= P_{\z} G_{\z}^{-1} U_{\z} \left( U_{\z}^h G_{\z}^{-1} U_{\z} \right)^{- \frac{1}{2}}
= P_{\z} B_{\z} \left( B_{\z}^h G_{\z} B_{\z} \right)^{-\frac{1}{2}}
\end{align*}
as we desired.
\end{proof}

\bibliographystyle{IEEEtran}
\bibliography{refs}




\end{document}